\newif\iffundamversion
\newif\ifarxivversion
\renewcommand{\AM}            {\ensuremath{\mathcal{AM}}}
\newcommand{\AMo}             {\ensuremath{\AM^{0}}}
\newcommand{\AMowoDiss}       {\ensuremath{\AMo\mkern-9mu{}_{-d}}}
\newclass  {\MC}              {MC}
\newcommand{\MCsemi}          {\ensuremath{\MC^*}}
\newcommand{\PMCuni}          {\ensuremath{\P\MC}}
\newcommand{\PMCsemi}         {\ensuremath{\P\MCsemi}}
\newcommand{\no}{\texttt{\textit{no}}}
\newcommand{\yes}{\texttt{\textit{yes}}}
\newcommand{\EvolveRule}[3]{\ensuremath{\left[\,#1\,\to\,#3\,\right]_{#2}}} %
\newcommand{\CommInRule}[3]{\ensuremath{#1\left[\,\right]_{#2}\,\to\,\left[\,#3\,\right]_{#2}}} %
\newcommand{\CommOutRule}[3]{\ensuremath{\left[\,#1\,\right]_{#2}\,\to\,\left[\,\right]_{#2}\,#3}} %
\newcommand{\DivideRule}[4]{\ensuremath{\left[\,#1\,\right]_{#2}\,\to\,\left[\,#3\,\right]_{#2}\left[\,#4\,\right]_{#2}}} %
\newcommand{\arule}{\ensuremath{\mathrm{a}}}            
\newcommand{\brule}{\ensuremath{\mathrm{b}}}            
\newcommand{\crule}{\ensuremath{\mathrm{c}}}            
\newcommand{\drule}{\ensuremath{\mathrm{d}}}            
\newcommand{\erule}{\ensuremath{\mathrm{e}}}
\newfunc{\MS}{MS}
\newfunc{\parent}{parent}
\newcommand{\env}{env}
\newcommand{\membraneConfig}{\mathcal{C}}
\newcommand{\encodedMemSystem}{\ensuremath{\langle \Pi \rangle}}
\newcommand{\setOfObjects}{\ensuremath{O}}
\newcommand{\setOfMultisets}{\ensuremath{M}}
\newcommand{\setOfLabels}{\ensuremath{H}}
\newcommand{\setOfIdLabelRels}{\ensuremath{\Lambda}}
\newcommand{\setOfMembraneRules}{\ensuremath{R}}
\newcommand{\membraneFamily}{\mathbf{\Pi}}
\newcommand{\ORgate}{\textsc{Or}}
\newcommand{\ANDgate}{\textsc{And}}
\newcommand{\NOTgate}{\textsc{Not}}
\newclass{\DLOGTIME}{DLOGTIME}
\newclass{\FL}{FL}
\newclass{\FAC}{FAC}
\newclass{\tally}{tally}
\newclass{\length}{length}
\newlang{\Parity}{Parity}
\newcommand{\Reducible}[3]{\ensuremath{#2_{\mathrm{#1}}(#3)}}
\newcommand{\Reduces}[2]{\leq^{#2}_{\mathrm{#1}}}
\newcommand{\bigO}[1] {\ensuremath{\mathcal{O}(#1)}}
\newcommand{\nats}{\ensuremath{\mathbb{N}}}
\newcommand{\tallyMachine}{\ensuremath{\mathcal{T}}}
\newcommand{\semiencoder}{\ensuremath{h}}
\newcommand{\encoder}{\ensuremath{e}}
\newcommand{\former}{\ensuremath{f}}
\newcommand{\muniform}      [3]{\ensuremath{(#1,#2)\textrm{-uniform-}#3}}
\newcommand{\uniform}      [2]{\ensuremath{#1\textrm{-uniform-}#2}}
\newcommand{\semiunif}      [2]{\ensuremath{#1\textrm{-semi-uniform-}#2}}
\newcommand{\ACACuniPMCowodiss}{\ensuremath{\muniform{\FAC^0}{\FAC^0}{\PMCuni\AMowoDiss}}}
\newcommand{\ACPMCwodissSemi}{\ensuremath{\semiunif{\FAC^0}{\PMCsemi\AMowoDiss}}}
\newcommand{\UniformMembrane}{\ACACuniPMCowodiss}
\newcommand{\SemiUniMembrane}{\ACPMCwodissSemi}
\newcommand{\emphatic}{recogniser\ensuremath{_{\,\geqslant 1}}\xspace}
\newcommand{\pgfextractangle}[3]{%
    \pgfmathanglebetweenpoints{\pgfpointanchor{#2}{center}}
                              {\pgfpointanchor{#3}{center}}
    \global\let#1\pgfmathresult  
}
  \newcommand{\email}[1]{\url{#1}}
  \newtheorem{theorem}{Theorem}
  \newtheorem{definition}[theorem]{Definition}  
  \newtheorem{lemma}[theorem]{Lemma}
\begin{document}

\title{Uniformity is weaker than semi-uniformity for some membrane systems}
\renewcommand\Authand{ \hspace{.1\textwidth} }

\newcommand{\nmthanks}{\thanks{%
   \href{mailto:a-nimurp@microsoft.com}{a-nimurp@microsoft.com}
   Supported by a PICATA Young Doctors fellowship from
    CEI Campus Moncloa, UCM-UPM, Madrid, Spain and an Embark Fellowship from the Irish Research Council for Science, Engineering and Technology.}}
 \newcommand{\dwthanks}{\thanks{%
 \href{mailto:woods@caltech.edu}{woods@caltech.edu}.
 Supported by NASA grant NNX13AJ56G, and NSF grants 0832824 \& 1317694 (The Molecular Programming Project), CCF-1219274 and CCF-1162589}}

\author[1]{Niall~Murphy\nmthanks}
\author[2]{Damien Woods\dwthanks}
\affil[1]{Microsoft Research Cambridge}
\affil[2]{California Institute of Technology}
 
\date{}
\maketitle

\begin{abstract}
 We investigate computing models that are presented as families of finite computing devices with a uniformity condition on the entire family.
  Examples of such models include Boolean circuits, membrane systems, DNA computers, chemical reaction networks and tile assembly systems, and there are many others. 
  However, in such models there are actually two distinct kinds of uniformity condition. 
  The first is the most common and well-understood, where each input length is mapped to a single computing device (e.g.\ a Boolean circuit) that computes on the finite set of inputs of that length.
  The second, called semi-uniformity, is where each input is mapped to a computing device for that input (e.g.\ a circuit with the input encoded as constants).
  The former notion is well-known and used in Boolean circuit complexity, while the latter notion is frequently found in literature on nature-inspired computation from the past 20 years or so.
  
  Are these two notions distinct? For many models it has been found that these notions are in fact the same, in the sense that the choice of uniformity or semi-uniformity leads to characterisations of the same complexity classes. 
  In other related work, we showed that these notions are actually distinct for certain classes of Boolean circuits. 
  Here, we give analogous results for membrane systems by showing that certain classes of uniform membrane systems are  strictly weaker than the analogous semi-uniform classes. This solves a known open problem in the theory of membrane systems. We then go on to present results towards characterising the power of these semi-uniform and uniform membrane models in terms of $\NL$ and languages reducible to the unary languages in $\NL$, respectively.
\end{abstract}
\iffundamversion
\begin{keywords}
  membranes systems;
  computational complexity;
  uniform families;
  semi-uniform families; 
  tally languages
\end{keywords}
\fi

  \section{Introduction}
  Many of the early DNA computing algorithms~\cite{HRBBLS2000,Lip1995,LWFCCS2000,OKLL1997}  involved mapping
  an instance of an $\NP$-hard problem (such as Maximal Clique) to a
  set of DNA strands and lab protocols, and then using well-known
  biomolecular techniques to solve the problem.  To assert generality for
  such an algorithm one would define a mapping from arbitrary problem
  instances to sets of DNA polymers and experimental protocols.  In order to
  claim that this mapping is not doing the essential computation, it would
  have to be  easily computable (for example,~logspace computable).  Circuit
  uniformity (introduced by Borodin~\cite{Bor1977}) provides a
  well-established framework where we map each input length~$n\in\mathbb{N}$
  to a circuit~$C_n \in \mathfrak{C}$, with a suitably simple mapping.  However, some of the
  DNA computing algorithms cited above do something different, they map an
  \emph{instance} $x$ of the problem to a computing device $C_x$ that is
  unique  to that input (via a suitably simple encoding function).
  This latter notion is called \emph{semi-uniformity}~\cite{Pau2003,PRS2003p},
  and in fact quite a number
  of nature-inspired computational models use semi-uniformity. 
  This raises
  the immediate question of whether the notions of uniformity and
  semi-uniformity are computationally equivalent. 
  We investigate this question in the field of  membrane computing or
  P-systems~\cite{Pau2003,PRS2010_handbook}.
  This is a branch of natural computing which explores the power of
  computational models that are inspired by the structure and function of
  living cells. 

  It has been shown in a number of models that whether one chooses to use
  uniformity or semi-uniformity does not affect the power of the model.
  However, our main result shows that uniformity is computationally strictly 
  weaker than semi-uniformity for a number of classes of  membrane systems. 
  Specifically, we prove that choosing one notion over another in this setting
  gives characterisations of complexity classes that are known to be distinct.
   The uniform versus semi-uniform question that we address has
  been stated as Open Problem C in~\cite{Pau2005c}.  

  Why is this result surprising?
  We know that the class of problems solved by a uniform family of devices
  is contained in the analogous semi-uniform class, since the former is a
  restriction of the latter.
  However, it is also the case that in almost all membrane system models
  studied to date, the classes of problems solved by semi-uniform and uniform
  families turned out to be equal, see, e.g.,~\cite{AP2007c,MW2007p,Sos2003p}.
  Specifically, if we want to solve some problem, by specifying a family of
  membrane systems (or some other model), it is often much easier to first use
  the more general notion of semi-uniformity, and then subsequently try to
  find a uniform solution.
  In almost all cases where a polynomial time semi-uniform family of membrane systems was given for some
  problem~\cite{AP2004p,PRS2003p,Sos2003p}, at a later point a uniform
  version of the same result was published~\cite{AMP2003p,AP2007c,PRS2003p}.
  Here we prove that this improvement is not always possible.

  We go on to give a number of other results that tease out the computational
  power of semi-uniform and uniform families of membrane systems.
  
  Our main result proves something general about uniform and semi-uniform families of finite devices that is
  independent of particular models and formalisms. Our techniques can be applied to other
  computational models besides membrane systems and 
  we have demonstrated this by showing similar results for Boolean circuits~\cite{MurphyWoods2013}.
  Indeed, a  number of other models  explicitly, or implicitly, use notions of
  uniformity and semi-uniformity. 
  Models presented as uniform families of devices include 
    membrane systems and 
    Boolean circuits 
      as noted above, as well as 
    DNA computers~\cite{Adl1994p,BSRW2009,WS2005p,qian2011scaling,Winfree05bar},
    chemical reaction networks~\cite{Cook2009,CHMT2012,TC2012,thachuk2013space},
    neural networks~\cite{Par1994b}
    and other models studied in computational complexity theory.
  Besides membrane systems, a surprising number of models, including some just mentioned, 
  are presented as semi-uniform
  families of devices, including 
    DNA computers~\cite{Lip1995,LWFCCS2000},
    chemical reaction networks~\cite{Cook2009,TC2012},
    the abstract tile assembly model~\cite{rothemund2000program,SW2007p},
    the nubots model of active  molecular self-assembly and robotics~\cite{nubots, nubotsDNA20},
    and an insertion-based polymer model~\cite{dabbyChenSODA2012,MalchikWinsow}. 
    Uniform and semi-uniform families of devices are both 
    natural ways to present a model of computation and 
    elucidating the distinction between them seems a worthy goal.
    
  Furthermore, although we do not formally show it, our results hold for a
  version of the stochastic chemical reaction network
  model~\cite{soloveichik2008computation} that meets our definitions for 
  membrane systems and in particular where there are families of networks
  deciding languages and \emph{unimolecular reactions only} (in the model there 
  are discrete natural number molecular counts and all reactions are of the form $A \rightarrow
  \mathcal{M} $, where $\mathcal{M}$ is a mutliset of molecular species).
  Interestingly, these results also hold if we generalise this model to use
  maximally parallel synchronous reaction updates. This shows that adding the
  seemingly strong and unrealistic ability of maximal parallelism in this
  context conveys no extra power to the model
    (despite the fact that it does increase the power of more general,
    bimolecular for example,
    chemical reaction network models).

  Our main result is of importance to work on models of computation and
  natural computing since it highlights that the (seemingly harmless) choice
  between uniformity and semi-uniformity in these models may lead to drastic
  changes in computational power.  How drastic? Roughly speaking, we find that
  the semi-uniform models studied here characterise the class $\NL$, while the
  analogous uniform models have power comparable to, or more formally
  reducible to, the unary languages in $\NL$.  Our work here and on Boolean
  circuits suggests that this question should be asked of other computational
  models.

    \subsection{Overview of results}
    Roughly speaking, a membrane system consists of a membrane-bound
    compartment  that contains other (possibly nested) membrane-bound
    compartments that in turn contain \emph{objects} that interact with each
    other and with membranes to carry out a computation. A family, or set, of
    \emph{recogniser} membrane systems decides a language $L$. Families can be
    uniform or semi-uniform. For a uniform family there is an associated pair
    of functions $(f,e)$, where $f$ maps a binary input word $x$, of length
    $n$, to a membrane system~$\Pi_n$ that may be used to process any word of
    length $n$, and $e$ encodes $x$ as a multiset of input objects to $\Pi_n$
    (for each of the $2^n$ words of length $n \in\mathbb{N}$ we have a single
    membrane system $\Pi_n$). For a semi-uniform family, a single function $h$
    maps the input word $x$ to a membrane system $\Pi_x$ (for each word we
    have a membrane system). In either case, rules are applied to objects in
    the membrane system until it produces special object(s) indicating that
    $x$ is accepted or rejected. Of course the encoding functions $f,e,h$
    should be suitably simple so that the membrane system, and not the
    encoding functions, are doing the interesting work. In this paper we use
    $\FAC^0$ uniformity and semi-uniformity, that is, the functions $f,e,h$
    are in $\FAC^0$, the class of functions computed by uniform constant depth
    polynomial size Boolean circuits; this is a class of fairly simple
    problems and is mostly known for what it does \emph{not} contain.

    In Section~\ref{sec:uni_noteq_semi} we give our main result, 
      that  
        uniform families of active membrane systems 
          without charges and dissolution (denoted $\AMowoDiss$)
        that run in polynomial time are strictly weaker than their semi-uniform counterpart.  
    We prove this by showing that these uniform families solve no more than non-uniform-$\AC^0$, 
        a class that does not even contain $\Parity$
          (the set of words over $\{0,1\}$ with an odd number of 1s).
    The analogous semi-uniform systems can indeed solve $\Parity$ and do much else besides.
    In fact, for two out of three models that we consider, the semi-uniform
    families exactly characterise $\NL$. This is shown in
    Section~\ref{sec:semi_uniform} and illustrated as
    Theorem~\ref{thm:PMCwodiss-is-NL} at the top of
    Figure~\ref{fig:summaryofallresults}. 
  
    This leaves the question:  what is the exact power of uniform families of $\AMowoDiss$ systems?
    In previous papers, where 
    more powerful membrane systems and complexity classes are studied,
    e.g.~\cite{AMP2003p,AP2004p,AP2007c}, model definitional choices 
    were not so important. 
    In our setting, definitional details such as the choice of uniformity 
    condition and the particular kinds of acceptance modes allowed 
      for such \emph{recogniser} membrane systems 
        lead to seemingly different results and some open questions as we now describe. 
    
    We give results for three variants on the definition of recogniser
    membrane system. The most powerful are \emph{acknowledger} membrane
    systems, where an accepting computation should produce one or more~$\yes$
    objects, and a rejecting computation should produce zero $\yes$ objects.
    In Section~\ref{sec:membrane-uni} we give an exact characterisation of
    uniform families of acknowledger $\AMowoDiss$ membrane systems. It turns
    out that they decide exactly those languages that are $\FAC^0$ disjunctive
    truth-table reducible to the unary languages in~$\NL$ (called
    $\tally\NL$). See Theorem~\ref{thm:uni_is_dtt_tallyNL} in
    Figure~\ref{fig:summaryofallresults}.

    In Section~\ref{sec:stricter} 
    we consider \emph{\emphatic} membrane systems: a restriction of
    acknowledger systems where an accepting computation produces one or more
    $\yes$ objects and zero $\no$ objects, and a rejecting computation
    produces one or more $\no$ objects and zero $\yes$ objects.
    We give upper and lower-bounds, in terms of classes reducible to
    $\tally\NL$, for uniform families of \emph{\emphatic} $\AMowoDiss$
    systems. In Figure~\ref{fig:summaryofallresults}, two upper bounds are
    illustrated as Theorems~\ref{lem:unimem_subset_dttTallyNL}
    and~\ref{lem:strict_unimem_in_ctt_tallynl}, and a lower bound as
    Theorem~\ref{lem:strict_mTallyNL_subset_uniMem}.

    The more standard, uniform \emph{recogniser} systems,  are a restriction of
    \emphatic membrane systems and are defined so that an accepting computation
    should produce a single $\yes$ object and zero $\no$ objects, and a rejecting
    computation should produce a single $\no$ object and zero $\yes$ objects. As
    noted above, our results (Figure~\ref{fig:summaryofallresults},
    Theorem~\ref{lem:uni_less_semi}) show that these uniform recogniser systems
    are strictly weaker than semi-uniform recogniser systems in our setting. We do
    not give a tight characterisation for the power of uniform recogniser systems,
    but discuss this as an open problem in Section~\ref{sec:openquestions}.
        
  \begin{figure}[t]
    \centering
  \resizebox{\linewidth}{!}{%
    \begin{tikzpicture}[%
      equiv/.style={%
        double distance=0.45ex,double,thick,
        -,
        >=stealth',
      },
      strict/.style={
        very thick,
        ->,
        >=stealth',
        decoration={%
          markings,
          mark=at position 0.5 with {%
            \node [#1] {$\subsetneq$};
          }
        },
        postaction={decorate}
      },
      refer/.style 2 args={
        decoration={%
          markings,
          mark=at position 0.5 with {%
            \node [inner sep=2pt,#2] {(\ref{#1})};
          }
        },
        postaction={decorate}
      },
      include/.style={->,>=stealth',very thick}]

     \node (tallyNL) {$\tally\NL$};
     \node (mTallyNL) [above=2em of tallyNL] {$\Reducible{m}{\FAC^0}{\tally\NL}$};
     \node (ttTallyNL) [above=8em of mTallyNL] {$\Reducible{tt}{\FAC^0}{\tally\NL}$};
     \node (TTallyNL) [above=4ex of ttTallyNL] {$\Reducible{T}{\FAC^0}{\tally\NL}$};
     \node (nl) [above=2em of TTallyNL] {$\NL$};
     \node (semiMem)  [left=1.5em of nl] {$\ACPMCwodissSemi$};
    \node (ackrecnote) [above=-1ex of semiMem] {acknowledger};

     \node (semiMemEmphatic)  [right=1.5em of nl] {$\ACPMCwodissSemi$};
    \node (emphaticnode) [above=-1ex of semiMemEmphatic] {\emphatic};

     \node (dttTallyNL)    [above=5em of mTallyNL, xshift=-4.7em]
        {$\Reducible{dtt}{\FAC^0}{\tally\NL}$};
     \node (cttTallyNL)    [above=5em of mTallyNL, xshift=4.7em]
        {$\Reducible{ctt}{\FAC^0}{\tally\NL}$};

     \node (uniformMem) [left=1.5em of dttTallyNL] {$\UniformMembrane$};
     \node (acknote) [above=-1ex of uniformMem] {acknowledger};
     \draw [equiv={xshift=-1.7ex},refer={thm:uni_is_dtt_tallyNL}{yshift=2.2ex}] (uniformMem) -- (dttTallyNL);

     \node  (mUniformMOR)  [above=2.6em of mTallyNL] {$\star$};
     \node (memlabel) [below=3.4em of cttTallyNL,xshift=8em] {$\UniformMembrane$};
     \node (recnote) [above=-1ex of memlabel] {\emphatic};
     \draw [dashed] (memlabel) -- (mUniformMOR); 
     
     \draw [include,
            refer={lem:strict_mTallyNL_subset_uniMem}{xshift=3ex}]
              (mTallyNL) -- (mUniformMOR); 
     \draw [include,
            refer={lem:unimem_subset_dttTallyNL}{xshift=-1.0ex,yshift=-1.5ex}] 
              (mUniformMOR) -- (dttTallyNL); 
     \draw [include,
            refer={lem:strict_unimem_in_ctt_tallynl}{xshift=1.0ex,yshift=-1.5ex}] 
              (mUniformMOR) -- (cttTallyNL); 
 
      \draw [equiv={yshift=-1.4ex},
             refer={thm:PMCwodiss-is-NL}{yshift=2.2ex}] 
              (semiMem) -- (nl); 
      \draw [equiv={yshift=-1.4ex},
             refer={thm:PMCwodiss-is-NL}{yshift=2.2ex}] 
              (nl) -- (semiMemEmphatic); 
      \draw [strict={xshift=-1.9ex,rotate=90}] (TTallyNL) -- (nl); 

      \draw [include] (dttTallyNL) -- (ttTallyNL); %
      \draw [include] (cttTallyNL) -- (ttTallyNL);
      \draw [strict={xshift=-1.9ex,rotate=90}] (tallyNL) -- (mTallyNL);
      \draw [include] (ttTallyNL) -- (TTallyNL);
      
      \draw [include] (mTallyNL) edge[out=30,in=320,->] (cttTallyNL);
      \draw [include] (mTallyNL) edge[out=150,in=210,->] (dttTallyNL);

      \pgfextractangle{\angle}{acknote}{semiMem}
      \draw [strict={xshift=-2.5ex,rotate=\angle},
             refer={lem:uni_less_semi}{xshift=3.5ex,yshift=0ex}]
             (acknote) -- (semiMem);
      
    \end{tikzpicture} 
  }
  \caption{%
    Summary of results.
        Numerical labels refer to theorems which are proved in this paper,
      and symbols are used to show inclusion type,
        with an unlabelled arrow denoting $\subseteq$.
    The figure shows the relationship between~$\NL$, 
      $\tally\NL$ (the set of unary languages decided in non-deterministic
        logspace), 
        and a number of classes $\Reducible{r}{\FAC^0}{\tally\NL}$ of languages 
        that are reducible to $\tally\NL$ by various types of $\FAC^0$ computable reductions $\text{r}$.
    The star ($\star$) indicates the class 
    labeled by the dashed line.
    See~\cite{MurphyWoods2013} for proofs of inclusions that do not have numerical labels and for more on classes reducible to $\tally\NL$.  }
  \label{fig:summaryofallresults}
  \end{figure}
  
    We note that there is a previous $\P$ characterisation 
      for both uniform and semi-uniform families of
        active membrane systems without charges and
        dissolution~\cite{NJNC2006p}: the same systems as we  use here, but
        under much more general uniformity conditions, namely \emph{polynomial time}, or $\P$, uniformity. 
    In that work the authors are motivated by the relationship with classes
    above $\P$ and so it is sufficient in their work to use $\P$ uniformity.
    When using significantly
    tighter uniformity conditions (e.g.\ $\FAC^0$), such polynomial-time encoding
    functions for uniform and semi-uniform families can be seen to be stronger
    than the membrane systems themselves~\cite{MW2011}  (assuming $\NL \subsetneq \P$).
    In this paper we use $\FAC^0$~uniformity which is weak enough to expose the
    underlying  power of certain, suitably weak,  classes of active membrane systems without charges or
    dissolution.
    A number of other varieties of membrane systems (e.g.~\cite{GPR2009,PP2010}) 
      also claim $\P$ characterisations that depend on $\P$ uniformity. We leave it as a possible direction for 
      future work to investigate these, and other,
      membrane systems under suitably tight notions of uniformity or semi-uniformity. 

  \section{Definitions} 
    \label{sec:complexity_defs}
    For a function $f \colon \{0,1\}^* \mapsto \{0,1\}^*$ and integers $m,n \geq 1$
    let $f_n \colon \{0,1\}^n \mapsto \{ 0,1\}^{m}$ be the restriction of $f$ to
    domain and range consisting  of strings of length $n$ and $m$ respectively. 
    We consider only functions~$f$ where for each $n$ there is an $m$ such that all
    length-$n$ strings in $f$'s domain are mapped to length-$m$ strings, 
    thus~$f = \bigcup_{n = 0}^{\infty} f_n$.  
    Each language $L \subseteq \{ 0,1\}^{\ast}$ has an
    associated total \emph{characteristic function}
    $\chi_L \colon \{ 0,1\}^{\ast} \mapsto \{0, 1\}$ defined  by $\chi_L (w) = 1$
    if~$w \in L$ and~$0$ if~$w \notin L$.
    We say a language $L$ is decided by a Turing machine $M$ if $M$ computes the 
    characteristic function $\chi_L$.
    For a string~$w$, we let~$|w|$ denote its length.
    
    Let $\NL$ be the class of languages accepted  
    by non-deterministic logarithmic-space Turing machines. Such machines have a 
    read-only input tape, a write-only output tape and a read-write work tape 
    whose length is a logarithmic function of  input length. 
    The class of functions 
    computed by a deterministic logarithmic-space Turing machines (with an
    additional write-only output tape) is denoted $\FL$.
  
    Let $\tally$ be the set of all languages over the one-letter alphabet $\{ 1\}$.
    We define $\tally\NL = \tally \cap \NL$, i.e.\ the class of all tally
    languages and length encoded languages in $\NL$. 
    For more details on complexity classes and Turing machines see~\cite{Pap1993x}. 
  
    A circuit $C_n$ computes a function 
      computes a function $f_n\colon\{0,1\}^n \mapsto \{0,1\}^m$ 
      on a fixed number~$n$ of Boolean variables.
    We consider functions of an arbitrary  number of variables by defining
    (possibly infinite) families of circuits.
    We say a family of
      circuits~$\mathfrak{C} = \{ C_n \mid n \in \mathbb{N} \}$ 
      computes a function~$f \colon \{0, 1\}^* \mapsto \{0, 1\}^*$
      if for all $n \in \mathbb{N}$
      and for all $w\in\{ 0,1\}^n$
      circuit $C_n$ outputs the string $f_n(w)$. 
    We say a family of circuits $\mathfrak{C}$ decides 
      a language~$L \subseteq \{ 0,1\}^*$ if for each~$w\in\{0,1\}^n$
      circuit~$C_n \in \mathfrak{C}$ on input $w$ computes~$\chi_L$. 
  
    In a \emph{non-uniform} family of circuits there is no required  
    similarity between family members.    
    In order to specify such a requirement   we use a \emph{uniformity
      function} that algorithmically specifies the similarity between members
    of a circuit family.
    Roughly speaking, a \emph{uniform circuit family}~$\mathfrak{C}$ is an
    infinite sequence of circuits with an associated function~$\former\colon
    \{1\}^* \rightarrow \mathfrak{C}$ that generates members of the family and
    is computable within some resource bound.  
    For more details on Boolean circuits see~\cite{Vol1999x}. 

    When dealing with uniformity for small complexity classes one of the 
    preferred uniformity conditions is $\DLOGTIME$-uniformity~\cite{BIS1990p}.
    Roughly speaking, 
      a circuit is $\DLOGTIME$-uniform if 
        there is a procedure that can decide 
          if a word is in the ``connection language'' of the circuit family 
          in time linear in the word length. 
      Each word of the connection language encodes 
        either 
          an input gate of the circuit, 
          an output gate of the circuit,
          or a wire connecting the output of one identified gate to the input
          of a second identified gate. 
      Each word also encodes, in binary, the number $n$ for this circuit.
      For more details on $\DLOGTIME$ uniformity see~\cite{AK2010,BIS1990p}.
    
    The \emph{depth} of a circuit is the length of the longest path from an input
      gate to an output gate.
    The \emph{size} of a circuit is the number of wires it contains~\cite{AK2010}.
      
    Non-uniform-$\AC^0$ is the set of languages decidable by 
    families of constant-depth polynomial-size
      (in input length~$n$) circuits with unbounded
      fan-in $\ANDgate$ and $\ORgate$ gates, and $\NOTgate$ gates with fan-in`1.
    $\AC^0$ is the set of languages decidable by constant-depth polynomial-size
      (in input length~$n$) \iffundamversion\linebreak\fi $\DLOGTIME$-uniform circuits with unbounded
      fan-in $\ANDgate$ and $\ORgate$ gates, and $\NOTgate$ gates with fan-in~1.
    $\FAC^0$ is the class of functions computable by polynomial-size
      constant-depth $\DLOGTIME$-uniform circuits with unbounded fan-in
      $\ANDgate$ and $\ORgate$ gates, and $\NOTgate$ gates with fan-in~1.

    \subsection{Reductions} 
    For concreteness, we explicitly define some standard types of reductions. 
    Let $A, B \subseteq \{ 0,1\}^{\ast}$.
    Let $\C$ be a set of functions (for example $\FL$ or $\FAC^0$), 
      a function $f$ is $\C$-computable if $f \in \C$.

    \begin{definition}[Many-one reducible]
      \label{def:many_One_reduction}
      Set $A$ is many-one reducible to set $B$, written $A \Reduces{m}{\C} B$,
      if there is a function $f$ that  is $\C$-computable with the
      property that for all $w$, $w \in A$, if and only if $f(w) \in B$.
    \end{definition}

    The following definition of truth table reduction comes
    from~\cite{BB1988,BHL1995}, see also~\cite{LLS1975,post1944}.    
    The Boolean function $\sigma$ is historically called a truth table~\cite{post1944}.
  
    \begin{definition}[Truth-table reduction]
      \label{def:truthtablereduction}
      Set $A$ is $\C$ truth table reducible to set $B$,
      written $A \Reduces{tt}{\C} B$,
      if there exists $\C$-computable functions~$\tau : \{ 0,1\}^* \rightarrow \{ 0,1\}^* \times \{ 0,1\}^* \times \ldots \,\,\, $  
        and~$\sigma : \{ 0,1\}^* \rightarrow \{ 0,1\}$ 
        such that 
        $w \in A$ if and only if 
        $\tau(w) = (a_1, \ldots, a_{\ell_w})$ such that 
        $\sigma( \chi_B(a_1),\ldots, \chi_B(a_{\ell_w}) ) = 1$,
        where
        $\chi_{B}$ is the characteristic function of $B$.
    \end{definition}
    A \emph{disjunctive} truth table reduction (dtt) is one where 
      at least one string generated by $\tau(w)$ is in~$B$,  in other words  $\sigma( \chi_B(a_1),\ldots, \chi_B(a_{\ell_w}) ) = \bigvee_{1\leq i \leq \ell_w} \chi_B(a_i)$.
    A \emph{conjunctive} truth table reduction (ctt) is one where 
      all the strings generated by $\tau(w)$ are in $B$, in other words  $\sigma( \chi_B(a_1),\ldots, \chi_B(a_{\ell_w})) = \bigwedge_{1\leq i \leq \ell_w} \chi_B(a_i)$. 
      
    \begin{definition}[Turing reducible]
      \label{def:turing_reduction}
      Set $A$ is $\C$ Turing reducible to $B$, written $A \Reduces{T}{\C} B$, if
      there is a Turing machine $M$, that is resource-bounded in the same way machines computing functions in $\C$ are, such that
      $w \in A$ iff~$M$ accepts~$w$ with~$B$ as its oracle. %
    \end{definition}

    The following implications follow directly from these definitions, 
      for more details see~\cite{LLS1975}.
      \begin{equation*}
        \begin{tikzpicture}[baseline]
          \matrix (m) [row sep=-1em, column sep=-0.8em]{%
            & \node[rotate=15]{$\implies$}; & \node{$A \Reduces{dtt}{\C} B$}; & \node[rotate=-15]{$\implies$}; & \\ %
             \node{$A \Reduces{m}{\C} B$}; & & & & \node{$A \Reduces{tt}{\C} B
               \implies A \Reduces{T}{\C} B$ };\\ %
            & \node[rotate=-15]{$\implies$}; & \node{$A \Reduces{ctt}{\C} B$}; & \node[rotate=15]{$\implies$}; & \\
          };
        \end{tikzpicture}
      \end{equation*}

    Let $\Reducible{r}{\FAC^0}{\C}$ be the set of all languages that are
      $\FAC^0$ reducible to languages in $\C$ via a reduction of some type  
      $\mathrm{r} \in \{ \text{m, dtt, ctt, tt, T} \}$. 

    \subsection{Configuration graphs}
    \label{sec:config_graphs} 
    
    \begin{definition}[Configuration Graph]
     Let $w \in \{0,1\}^*$ be the input to a halting $s(|w|)$-space bounded Turing
     machine~$M$. %
      The \emph{configuration graph} $G_{M,w}$ of $M$ on input $w$ is an acyclic directed graph 
        where for each potential configuration of $M$ there is a vertex that encodes it and
        where
        a potential configuration consists of  
          an input read bit,
          work tape contents,
          input tape head position
          and work tape head position.
      The graph $G_{M,w}$ has a directed
          edge from a vertex~$c$ to a vertex $c'$ if the configuration encoded by $c'$ 
          can be reached from the configuration encoded by~$c$ in one
          step via $M$'s transition function.
    \end{definition}

    A configuration graph $G_{M,w}$ has the property that there is a directed
      path from the vertex $c_s$ representing the initial configuration, to the accept vertex $c_a$ 
      if an only if $M$ accepts input~$w$. 
      Also, we consider only space bounded Turing machines that do not repeat
      a configuration (i.e.\ loop), hence we define configuration graphs to be acyclic which will be a useful property later on.
    We are interested in $\bigO{\log |w|}$ space bounded Turing machines, whose configuration graphs are of size (number of vertices) $\bigO{|w|^2 |Q|}$.
    Lemma~\ref{lem:config_in_AC0} follows from Theorem~3.16~in~\cite{Imm1999x}.
    \begin{lemma}
      \label{lem:config_in_AC0}
      Given the binary encoding of 
        a Turing machine $M$, which has state set $Q$ and
        an $\FAC^0$ computable space bound $\bigO{\log |w|}$,
        and given an input $w$, 
        the configuration graph $G_{M,w}$, of size $\bigO{|w|^2 |Q|}$, 
        is computable in $\uniform{\DLOGTIME}{\FAC^0}$.
    \end{lemma}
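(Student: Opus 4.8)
The plan is to exhibit a single \uniform{\DLOGTIME}{\FAC^0} circuit family that, on input the binary encodings $\langle M\rangle$ and $w$, outputs the adjacency matrix of $G_{M,w}$. The starting observation is that, since the work tape has length $\bigO{\log |w|}$ while the state set $Q$ and the tape alphabet have size independent of $|w|$, every reachable configuration of $M$ on $w$ is encoded by a binary string of length $\bigO{\log |w|}$: a fixed-width field for the state, a field of $\bigO{\log|w|}$ bits for the work-tape contents, fields for the work-tape and input-tape head positions, and one cached input-read bit. Hence there are only $\mathrm{poly}(|w|)$ configurations (matching the stated count $\bigO{|w|^2 |Q|}$), so the adjacency matrix has $\mathrm{poly}(|w|)$ entries. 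I would compute all entries in parallel, using one identical constant-depth sub-circuit per ordered pair $(c,c')$ of configuration encodings.

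Each such sub-circuit decides the one-step edge predicate ``$c'$ is reachable from $c$ in a single transition of $M$''. Because the fields of a configuration occupy fixed bit positions inside an $\bigO{\log|w|}$-bit encoding, extracting the state, the two head positions, and the work-tape contents is immediate in $\FAC^0$. First I would write the predicate as a conjunction of a constant number of local consistency checks: (i) the symbol currently scanned on the input tape is obtained by selecting the bit of $w$ at the binary-encoded input-head position recorded in $c$; (ii) the transition to apply is looked up in the table inside $\langle M\rangle$ from the current state and the two scanned symbols; and (iii) $c'$ agrees with the result of that transition, i.e.\ the new state matches, the work-tape contents of $c'$ equal those of $c$ except possibly in the single cell under the work head (overwritten as prescribed), and each head moves by at most one cell in the prescribed direction. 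Each of (i)--(iii) reduces to bit selection at a binary address, a table lookup, and equality/comparison tests on $\bigO{\log|w|}$-bit numbers, all of which are standard \uniform{\DLOGTIME}{\FAC^0} operations; the $\FAC^0$-computability of the space bound is used precisely to fix the field widths of the encoding.

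For the uniformity of the whole family I would appeal to the regularity of the construction: the circuit is $\mathrm{poly}(|w|)$ syntactically identical blocks indexed by the pair $(c,c')$, so deciding the connection language amounts to index arithmetic naming gates inside a block, which runs in time linear in the index length. The hard part will be justifying items (i) and (ii) rigorously, namely that selecting the input bit at a position held \emph{inside} a configuration, and looking up a transition of the \emph{input-supplied} machine $M$, are genuinely constant-depth and \DLOGTIME-uniform rather than merely logspace. Both are instances of indexing a string at a binary address, and the cleanest way to discharge them --- and indeed to obtain the lemma as a whole --- is to observe that the one-step edge relation is first-order definable from $w$ and $\langle M\rangle$ and then invoke the correspondence between first-order definability and \uniform{\DLOGTIME}{\FAC^0}, which is the content of Theorem~3.16 of~\cite{Imm1999x} (see also~\cite{BIS1990p}).
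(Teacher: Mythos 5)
Your proposal is correct and ultimately rests on exactly the same foundation as the paper: the paper offers no detailed argument for Lemma~\ref{lem:config_in_AC0}, stating only that it ``follows from Theorem~3.16 in~\cite{Imm1999x}'' (the correspondence between first-order definability and \uniform{\DLOGTIME}{\FAC^0}), which is precisely the theorem you invoke to discharge the uniformity of your bit-selection and table-lookup steps. Your explicit per-pair circuit construction is a sound elaboration of what the paper leaves implicit, so the two arguments agree in substance.
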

        
    \subsection{Membrane systems}
    \label{sec:mem_def}
    In this section we define the specific variant of membrane systems we use 
    in this paper.
    We also define recognizer membrane systems, uniform families and some complexity classes.
    These definitions are based on those from the literature~\cite{GLPZ2013,PRRW2009x}.
    
    In this paper the term \emph{membrane systems} and the notation 
      $\AMowoDiss$ refer to active membrane systems without charges and without
      dissolution rules~\cite{NJNC2006p,PRRW2009x}. 

    Let $\MS(O)$ represent the set of all multisets over the elements of the
    finite set $O$. 

    \begin{definition}
      \label{def:membrane}
      A \emph{membrane system} of type $\AMowoDiss$ %
      is a tuple~$\Pi =
      (\setOfObjects,  \mu,  \setOfMultisets, \setOfLabels,\setOfIdLabelRels,
      \setOfMembraneRules)$ where:
      \begin{itemize}
        \item $\setOfObjects$ is the alphabet of objects (or the set of object
          types); 
        \item $\mu = (V_\mu, E_\mu, \env)$ is a rooted tree representing the
          membrane structure. 
          $V_\mu \subsetneq \nats$ is the finite set of membranes.
          $E_\mu \subsetneq V_\mu \times V_\mu$
              such that $(p,c) \in E_\mu$ if
                the (parent) membrane $p$ contains the (child) membrane $c$.
            The root, $\env \in V_\mu$, of the tree is the  only 
              membrane with no parent and is called the ``environment''.
            Leaves of the tree represent ``elementary membranes'': i.e.\
            membranes  which contain no  other membranes.  
        \item $\setOfMultisets\colon V_\mu \rightarrow \MS(\setOfObjects)$ map each 
          membrane to an object multiset, defining the membrane's object contents;
        \item $\setOfIdLabelRels\colon  V_\mu \rightarrow \setOfLabels$ is an
          injective mapping of membranes to $\setOfLabels$, the finite set of membrane labels.
          In this work the environment membrane always has the label ``$\env$''; 
        \item $\setOfMembraneRules$ is a finite set of developmental rules of
          the following types (where~$o,u,c \in \setOfObjects$ and $w \in
          \MS(\setOfObjects)$, $h \in \setOfLabels$):
          \begin{itemize}
            \item [($\arule$)] $\EvolveRule{o}{h}{w}$ (object rewriting), an
              object $o$ in a membrane with label $h$ is replaced by a multiset
              of objects $w$. 
            \item [($\brule$)] $\CommInRule{o}{h}{u}$ (communication in),
              an object $o$ in a membrane with a child membrane with label $h$ is
              moved into the child membrane and modified to become $u$.
            \item [($\crule$)] $\CommOutRule{o}{h}{u}$  (communication out),
              an object $o$ in a membrane with label $h$ is
              moved into the parent membrane and modified to become $u$.
            \item [($\erule$)]  $\DivideRule{o}{h}{u}{v}$ (elementary membrane division),
              an elementary membrane with label $h$ containing object $o$ is duplicated, 
              in one copy $o$ is replaced by $u$ while  
                in the other copy it is replaced by $v$.
          \end{itemize}
      \end{itemize}
        The environment membrane cannot divide nor communicate out objects.%
        \footnote{Definitions of active membranes often include a second container
        membrane that cannot dissolve called the ``skin''~\cite{PRRW2009x},
        we omit this from our definitions. 
        The proofs in this paper can be easily modified to account for a skin.} 
    \end{definition}
    The missing ($\drule$) rule is the dissolution rule   
         which we do not consider in this paper. 
    Active membrane systems may also have \emph{non-elementary membrane division}
    rules~\cite{PRRW2009x}.
    That is,  membranes with child membranes may also divide. 
    For the kinds of membrane systems we consider in this paper the inclusion or omission of non-elementary division 
    rules does not affect the results~\cite{NJNC2006p,MW2011}.

    A \emph{configuration}~$\membraneConfig$ of a membrane system is a
    tuple~$(\mu = (V_\mu, E_\mu, \env) , \setOfMultisets, \setOfIdLabelRels)$ whose elements are defined
    in Definition~\ref{def:membrane} (with the exception that $\setOfIdLabelRels$
    may be non-injective).

    A \emph{permissible encoding} of a membrane system~$\encodedMemSystem$, or of
    a configuration~$\langle \membraneConfig \rangle$, encodes all multisets in a
    unary manner.  For example, a multiset is encoded in the format~$\left[
    a,a,a,b,b\right]$, rather than in the shorter format $a^3 b^2$.
    Likewise, the membrane structure should be encoded such that each
    membrane child-parent relation is written explicitly. 

    A configuration $\membraneConfig_i$ \emph{transitions}  to configuration
    $\membraneConfig_{i+1}$ by the application of a multiset of rules $\mathcal{R}$ from the set
    $\setOfMembraneRules$.
    The rules are applied in a \emph{maximally parallel manner}.  That
    is, at each timestep, a multiset of applicable rules $\mathcal{R}$ is
    non-deterministically chosen such that 
      (i) all rules in $\mathcal{R}$ are applicable, and
      (ii) there does not exist a multiset of applicable rules $\mathcal{R}'$ such that 
      $\mathcal{R} \subsetneq \mathcal{R'}$. 
    Rules are \emph{applicable} in a timestep according to the following principles:
        Rules are applied to the most deeply nested membranes first. 
        In each timestep, an object can be involved in at most one rule of
        any type.
        A membrane can be the subject of at most one rule of type 
        ($\brule$), ($\crule$) or ($\erule$).
        If a membrane is divided (a rule of type ($\erule$)) and
        there are objects in this membrane which evolve via rules of type
        (\arule), then we assume that first the  type (\arule) rules  are applied,
        and then the division rule. 
        All other rules are applied non-deterministically. 

    A \emph{computation} of a membrane system is a sequence of
    configurations where each configuration  transitions to the next. 
    As noted above, at a given timestep the multiset of applicable rules is
    non-deterministically chosen: therefore on a given input there are multiple
    possible computations. In other words, membrane systems are
    non-deterministic. 
    A computation that reaches a configuration
    where no more rules are applicable is called a \emph{halting computation}.

  \subsubsection{Recogniser, \emphatic, and acknowledger membrane systems}\label{sec:recog}
For the following three definitions it is the case that the set of objects $\setOfObjects$ contains the special objects $\yes$ and $\no$ and that there are no rules applicable to $\yes$ or $\no$ (hence if $\yes$ or $\no$ are created, they can never be destroyed).
  The standard~\cite{PRRW2009x} definition of a recogniser membrane system is as follows.
  \begin{definition}
    \label{def:recogniser}
  A \emph{recogniser membrane system} is a membrane system  such that 
           all computations halt,
           and at the halting step (and not before) 
           exactly one of the objects
           $\yes \in \setOfObjects$ 
           or~$\no \in \setOfObjects$
           appears in the multiset of the environment membrane.
  \end{definition}
   A computation that halts
    with $\yes$ in the environment is referred to as an \emph{accepting computation}
    while one with $\no$ in the environment is referred to as a
    \emph{rejecting computation}. In this paper, and in previous
    work~\cite{MW2008bc,MW2011}, we also use the following more general
    systems:

  \begin{definition}
    \label{def:emphaticrecogniser}
    A \emph{\emphatic  membrane system} is a membrane system  such that 
           all computations halt, and
            either 
              (a) one or more copies of the object 
                $\yes \in \setOfObjects$ 
            or 
              (b) one or more copies of the object 
                $\no \in \setOfObjects$
           appear in the multiset of the environment membrane, but not both. %
  \end{definition}
  As with recogniser membrane systems, a computation of a \emphatic
     membrane system that halts
     with~$\yes$ in the environment is referred to as an \emph{accepting computation}
    while one with $\no$ in the environment is referred to as a \emph{rejecting computation}. In this paper we also use the following systems that are more general than the two above:

  \begin{definition}
    \label{def:acknowledger}
 \emph{Acknowledger membrane systems} are systems such that 
    all computations halt 
    (and where one or more copies of the distinguished object $\yes$ 
      may or may not appear in the $\env$ membrane). 
  \end{definition}
    We say that a computation of an acknowledger membrane system is an \emph{accepting
    computation}
    if at least one $\yes$ object is present in the $\env$ membrane at the final step.
  A computation of an acknowledger membrane system is in a \emph{rejecting computation}
    if  there are zero $\yes$ objects in 
    the $\env$ membrane at the final step.

\subsubsection{Families of membrane systems}
    There are two main notions of uniformity considered in the membrane
    computing literature defined as follows.
    
    \begin{definition}[Semi-uniform families]
      \label{def:semiuniform}
      A family of membrane systems systems 
      $\membraneFamily = \{\Pi_w \mid w \in \Sigma^* \}$
      is said to be \emph{semi-uniform} if there is a function 
      $\semiencoder \colon \Sigma^* \mapsto \membraneFamily$ 
      that maps from each input word $w$ to a description (in a permissible encoding) of a membrane system~$\Pi_w$. 
    \end{definition}

    \begin{definition}[Uniform families]
      \label{def:uniform}
      A family of membrane systems 
        $\membraneFamily = \{\Pi_{n} \mid n \in \mathbb{N}  \}$ 
        is said to be uniform if there are two associated functions:
        \begin{enumerate} 
        \item $\former \colon 1^* \mapsto \membraneFamily$  that maps $1^{n}$ (the unary representation of $n$) %
            to the description (in a permissible encoding) of a membrane
            system $\Pi_{n}$ with a designated input membrane; 
      \item  $\encoder \colon \Sigma^* \mapsto \MS(\setOfObjects)$
      that maps a word $w\in \Sigma^*$ to the \emph{input} multiset $\encoder(w)$ (in a permissible
          encoding) where~$\setOfObjects$ is the set of objects of
          $\former(1^n)$, $n = |w|$.
          \end{enumerate}
    \end{definition}
       We let   $\Pi_{n}(\encoder(w))$ denote  the membrane system $\former(1^{n}) = \Pi_{n}$ with 
        the multiset $\encoder(w)$ in its designated input membrane. Note that both $\Pi_{n}$ and $\Pi_{n}(\encoder(w))$ are membrane systems.

    In this paper, we deal only with confluent membrane systems:
    in a \emph{confluent membrane system}~$\Pi$ all computations of $\Pi$
    agree on the answer, that is, either all of $\Pi$'s computations are
    accepting (in which case $\Pi$ accepts) or else all of $\Pi$'s computations are rejecting (in which case $\Pi$ rejects). 

    A semi-uniform family, $\membraneFamily$, 
    recognises a language $L \subseteq \Sigma^\ast$ confluently  
    if for all $w\in\Sigma^\ast$ there exists 
      $\Pi_w \in \membraneFamily$ such that
        $w \in L$ implies that  $\Pi_w$ accepts confluently 
        and $w \notin L$ implies $\Pi_w$ rejects confluently. 
    A uniform family, $\membraneFamily$, with encoder $\encoder$, 
    recognises a language $L \subseteq \Sigma^\ast$ confluently  
    if for all $w\in\Sigma^\ast$ there exists 
      $\Pi_{|w|}(\encoder(w))$ where $\Pi_{|w|} \in \membraneFamily$ such that 
        $w \in L$ implies that  $\Pi_{|w|}(\encoder(w))$ accepts confluently 
        and $w \notin L$ implies $\Pi_{|w|}(\encoder(w))$ rejects confluently. 
    Such a (semi-)uniform families are called a \emph{confluent} families of 
     recogniser, \emphatic, or acknowledger membrane systems. 
      
    That is, each
    membrane system $\Pi$ in a confluent family starts from a fixed initial
    configuration 
    and then $\Pi$ non-deterministically chooses one
    from a number of valid computations.  All of
    these valid computations give the same result: either all accepting (if
    $w\in L$) or else all rejecting (if $w\notin L$). 

    If the functions 
      $\former(1^{|w|})$ and $\encoder(w)$
        (or respectively the single function $\semiencoder(w)$)
      for a (semi-)uniform family
        are computable in time polynomial in $|w|$ on a Turing machine we say the
      family uses \emph{polynomial time (semi-)uniformity}.
    If the uniformity functions are computable 
      by $\DLOGTIME$ uniform constant depth circuits,
        that is, $f,e,h \in \FAC^0$, 
          then the family is said to use  constant depth uniformity. 

    In this paper we consider two classes of problems, 
      those that can be solved by 
      $\FAC^0$-uniform families of confluent $\AMowoDiss$ 
        (active membranes without charges or dissolution rules)
        that run in time polynomial in $|w|$, denoted
        $\UniformMembrane$, 
      and 
      $\FAC^0$-semi-uniform families of confluent $\AMowoDiss$ systems 
        that run in time polynomial in $|w|$, denoted
        $\SemiUniMembrane$. 

    \subsection{Context-freeness in membrane systems}

    \begin{lemma}\label{lem:one_step}%
    Let $o$ be an object 
       in a membrane with label $h$ 
       in a configuration $\membraneConfig_i$  
       of a membrane system~$\Pi$. 
    Remove all other objects from $\membraneConfig_i$ to get configuration  $\membraneConfig^{\emptyset}_i$.
    If there is a rule $r$ in $\Pi$ 
      such that by applying that rule to $o,h$ in $\membraneConfig^{\emptyset}_i$ gives 
        a configuration $\membraneConfig^{\emptyset}_{i+1}$ with object $o'$ in $h'$,
      then
        it is the case that from configuration $\membraneConfig_i$
          there exists a configuration $\membraneConfig_{i+1}$ reachable in a single step that contains~$o'$ in~$h'$.
  \end{lemma}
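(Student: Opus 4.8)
The plan is to exploit the defining feature of $\AMowoDiss$ that motivates the ``context-freeness'' heading: every rule type --- namely ($\arule$), ($\brule$), ($\crule$) and ($\erule$) --- is triggered by a \emph{single} object together with purely structural data (a membrane label, and parent/child/elementary relations), and, crucially, that none of these rules deletes a membrane, since the dissolution rule ($\drule$) is absent. First I would observe that passing from $\membraneConfig_i$ to $\membraneConfig^{\emptyset}_i$ deletes objects only: it leaves the membrane tree $\mu$, all labels, all parent--child relations, and the elementary/non-elementary status of every membrane unchanged, and it leaves $o$ sitting in its membrane labelled $h$. Since the applicability of $r$ to the pair $(o,h)$ is determined entirely by this preserved data --- for ($\arule$) just that $o$ occurs in a membrane labelled $h$, for ($\brule$)/($\crule$) the existence of the appropriate child/parent, and for ($\erule$) that the membrane is elementary --- and since \emph{re}introducing the other objects cannot falsify any of these conditions, $r$ is applicable to $o$ in $\membraneConfig_i$ as well.

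Next I would invoke maximal parallelism. The singleton multiset consisting of ``$r$ applied to $o$'' is applicable in $\membraneConfig_i$ by the previous paragraph, and the applicability constraints (each object used by at most one rule, each membrane the subject of at most one rule of type ($\brule$)/($\crule$)/($\erule$)) are local and bounded. Hence, as $\membraneConfig_i$ is finite, this singleton can be greedily enlarged, one applicable rule at a time, to a multiset $\mathcal{R}$ that is maximal with respect to inclusion. By the definition of the transition relation there is then a valid single step $\membraneConfig_i \to \membraneConfig_{i+1}$ that uses $\mathcal{R}$, and this step applies $r$ to $o$.

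It remains to check that $o'$ really does land in a membrane labelled $h'$ in $\membraneConfig_{i+1}$, i.e.\ that the rest of $\mathcal{R}$ cannot interfere. Here I would argue case by case on the type of $r$, tracking the membrane that receives $o'$: for ($\arule$) the receiving membrane keeps label $h'=h$ (and if that membrane is concurrently divided by some other rule of $\mathcal{R}$, both resulting copies are labelled $h$ and inherit the rewritten contents, so $o'$ still occurs in a membrane labelled $h$); for ($\erule$) the two copies produced are both labelled $h'=h$ and $o'$ is the $u$ or $v$ placed in one of them; for ($\brule$) the receiving child keeps label $h'=h$; and for ($\crule$) the receiving parent keeps its label $h'$ and, being non-elementary, cannot itself divide. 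Throughout, the decisive fact is that \emph{no} rule of $\AMowoDiss$ can delete the membrane that receives $o'$, because dissolution is absent; and $o'$, being produced only at the end of the step, is not the subject of any rule in $\mathcal{R}$. Hence $o'$ occurs in a membrane labelled $h'$ in $\membraneConfig_{i+1}$.

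I expect the main obstacle to be precisely this last step: verifying that the concurrently applied rules of $\mathcal{R}$ can neither relabel nor destroy the membrane receiving $o'$. This is exactly where the restriction to systems \emph{without} dissolution is essential --- were a rule of type ($\drule$) available, some other object present in $\membraneConfig_i$ (but absent from $\membraneConfig^{\emptyset}_i$) could trigger the dissolution of the membrane holding $o'$, and the conclusion would fail. The remaining bookkeeping --- that greedy extension to a maximal multiset is always possible, and that freshly created objects are inert within the same step --- is routine given the locality of the applicability conditions.
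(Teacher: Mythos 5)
Your proof is correct, and at its core it is the same argument as the paper's: both establish the claim by exhibiting a maximal multiset of rule applications for $\membraneConfig_i$ that contains ``$r$ applied to $o$ in $h$.'' The difference is in how that multiset is built. The paper works outside-in: it removes the instance $o$, takes \emph{any} maximal multiset $R$ applicable to $\membraneConfig_i - \{o\}$, and then patches $r$ back in --- which, for rules of type ($\brule$), ($\crule$), ($\erule$), forces an explicit repair step (deleting from $R$ the at most one rule that already claims the membrane labelled $h$, yielding $\hat{R}$). You work inside-out: seed with the singleton $\{r \text{ on } o\}$ (applicable in $\membraneConfig_i$ since deleting objects only removes, and restoring them only adds, material irrelevant to $r$'s structural preconditions) and greedily extend to maximality, which sidesteps the repair step entirely because $r$ has first claim on its object and membrane. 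Your version is slightly cleaner on that point, and your closing case analysis --- checking that the concurrently scheduled rules can neither relabel, dissolve, nor (except by label-preserving duplication under ($\erule$)) disturb the membrane receiving $o'$, with the absence of ($\drule$) doing the real work --- makes explicit a non-interference fact that the paper's proof leaves implicit in its final sentence. The paper's formulation buys a marginally shorter write-up; yours buys a self-contained justification of exactly why dissolution-freeness is the hypothesis that matters, which is the right instinct given that this lemma is the paper's entire foundation for ``context-freeness.''
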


  \begin{proof}
    The rule $r$ is of the type ($\arule$), ($\brule$), ($\crule$) or ($\erule$) as described in Definition~\ref{def:membrane}.
    It is sufficient to show that 
      there is always at least one maximal set of rule applications for
      configuration $\membraneConfig_i$ that creates $o'$ in $h'$ in $\membraneConfig_{i+1}$.
    
    Recall that an object in a configuration can be involved in at most one rule of any type.
    If the rule~$r$ is of type ($\arule$),
      it has the form $\EvolveRule{o}{h}{o' w}$ 
        where $w$ is a (possibly empty) string over~$\setOfObjects$ 
        and it is necessarily the case that $h = h'$ (rules of type ($\arule$) are applied within a single membrane).

    Let the notation $\membraneConfig_i - \{ o \}$ denote 
      the configuration $\membraneConfig_i$ 
      without the instance of the object $o$ under consideration
      and consider any maximal multiset $R$ of rules that can be applied 
      to the configuration $\membraneConfig_i - \{ o \}$. 
    Also, consider the multiset of rule applications $R$ unioned with the
    application of the rule $\EvolveRule{o}{h}{o' w}$ to our object instance~$o$
    in the relevant membrane with label $h$ in $\membraneConfig_i$.
    We claim that this new multiset is a maximal multiset of rules that can be applied
    to $\membraneConfig_i$.
    To see this notice that object instance $o$ has a rule being applied to it, 
    and each object can have at most one rule applied to it, 
    and no other objects with applicable rules are without rules because $R$ was maximal. 
    Hence there is a maximal multiset of rule applications for
    $\membraneConfig_i$ that applies $r$ 
    and hence when it is applied $\membraneConfig_{i+1}$ contains $o'$ in a membrane with label~$h=h'$.

    Rules of type ($\brule$), ($\crule$) and ($\erule$) involve both an object and a membrane.
    Consider $\membraneConfig_i - \{ o \}$  defined as above,
      and consider any maximal multiset $R$ of rule applications to $\membraneConfig_i - \{ o \}$.
    Furthermore,
      if in $R$ there is a rule that involves the membrane with label $h$ 
      where object instance $o$ was,
      then remove that rule application  from $R$ to get $\hat{R}$.
    We claim that
      the multiset of rule applications $\hat{R}$, unioned together with the rule application  
      ``rule $r$ applied to our object instance $o$ contained in the membrane with label~$h$'' is 
      a maximal multiset of rule applications for $\membraneConfig_i$.
      To see this note that 
        (i) $r$ is now being applied to the relevant instances of $o,h$ 
            so no other rule can be applied to that object nor to the membrane with that label,
        and 
        (ii) there are no other rules that can be applied  because~$R$ was maximal.
      After the application of this maximal multiset of rules 
        the new configuration~$\membraneConfig_{i+1}$ contains $o'$ in a membrane with label~$h'$.
  \end{proof}

  The following lemma generalises Lemma~\ref{lem:one_step} from one to multiple
  computation steps, and applies it to the setting of systems that recognise
  languages.   Intuitively, it states that if a sequence of rules $r$ can be
  applied starting from some configuration it is not possible to prevent this
  from happening by adding new objects to that configuration.
  \begin{lemma}  
    \label{lem:many_steps}
    Let $\Pi$ be a recogniser, \emphatic or acknowledger membrane system.
    Let $o$ be an object in a membrane with label $h$ in a configuration $\membraneConfig_i$ of $\Pi$. 
    Remove all other objects from $\membraneConfig_i$ 
      to get configuration $\membraneConfig^{\emptyset}_i$.
    If starting from configuration $\membraneConfig^{\emptyset}_i$ 
      there is a computation that halts after $t$ steps on configuration
      $\membraneConfig_{i+t}$ that contains object $\yes$ in the environment 
    then it is the case that starting from configuration $\membraneConfig_i$
      there exists a halting computation with $\yes$ in the environment.
  \end{lemma}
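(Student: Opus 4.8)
The plan is to reduce this multi-step statement to repeated application of Lemma~\ref{lem:one_step}. Since $\AMowoDiss$ systems have neither charges nor dissolution, every rule acts on a single object (together with its host membrane in the cases $(\brule)$, $(\crule)$, $(\erule)$), so the $\yes$ object produced by the halting computation from $\membraneConfig^{\emptyset}_i$ is the endpoint of a chain of single-object transformations descending from the lone object $o$. I would first view the computation $\membraneConfig^{\emptyset}_i \to \cdots \to \membraneConfig_{i+t}$ as a tree of object productions rooted at $o$, and extract a single root-to-leaf path $o = o_0 \xrightarrow{r_0} o_1 \xrightarrow{r_1} \cdots \xrightarrow{r_{k-1}} o_k = \yes$ with $k \leq t$, where each rule $r_j$ transforms $o_j$, sitting in a membrane with label $\ell_j$, into $o_{j+1}$ in a membrane with label $\ell_{j+1}$, and where $\ell_k = \env$.

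I would then replay this path from the full configuration $\membraneConfig_i$, one rule at a time. Proceeding by induction on $j$, the invariant to maintain is that $\membraneConfig_i$ reaches in $j$ steps a configuration $D_j$ containing $o_j$ in some membrane with label $\ell_j$; the base case $D_0 = \membraneConfig_i$ is immediate. For the inductive step I would invoke Lemma~\ref{lem:one_step} with tracked object $o_j$ and rule $r_j$, obtaining a one-step transition from $D_j$ to a configuration $D_{j+1}$ that contains $o_{j+1}$ in a membrane with label $\ell_{j+1}$, irrespective of the many other objects present in $D_j$. After $k$ steps, $D_k$ contains $\yes$ in the environment.

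To conclude, I would extend the length-$k$ computation $\membraneConfig_i \to D_1 \to \cdots \to D_k$ to a halting one. As $\Pi$ is a recogniser, \emphatic, or acknowledger system, every computation halts, so this prefix extends to a halting computation; and since no rule applies to $\yes$ and the environment membrane is never dissolved and cannot communicate its contents out, the copy of $\yes$ already in the environment of $D_k$ survives to the halting configuration. This produces the required halting computation of $\membraneConfig_i$ with $\yes$ in the environment.

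The main obstacle is discharging the hypothesis of Lemma~\ref{lem:one_step} at each stage, that is, showing $r_j$ really is applicable to $o_j$ after all other objects are stripped from $D_j$. For rewriting rules $(\arule)$ this is automatic, but for $(\brule)$, $(\crule)$ and $(\erule)$ applicability is a structural condition on the membrane tree of $D_j$ (existence of a suitably labelled child membrane, existence of a parent membrane, or the host membrane being elementary), and the tree of $D_j$ need not match the one occurring at the corresponding moment of the computation from $\membraneConfig^{\emptyset}_i$, because $D_j$ was built by applying only the path rules. I would settle this using the structural invariants of $\AMowoDiss$ systems: no membrane is ever dissolved, only elementary membranes divide and division preserves labels, and labels are initially injective. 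Hence a membrane that has children is unique for its label and persists, being elementary is a property fixed by the initial structure $\mu$, and any child membrane demanded by a $(\brule)$ rule along the path traces back to one already present in $\mu$, hence in $\membraneConfig_i$ and in every $D_j$. Therefore each structural precondition that held along the path in the computation from $\membraneConfig^{\emptyset}_i$ still holds in $D_j$, and Lemma~\ref{lem:one_step} applies at every stage.
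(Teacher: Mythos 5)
Your proposal is correct and follows essentially the same route as the paper's proof: extract from the halting computation of $\membraneConfig^{\emptyset}_i$ a chain of single-object rule applications leading from $o$ to $\yes$, replay it from $\membraneConfig_i$ by iterating Lemma~\ref{lem:one_step}, and conclude using the facts that no rule applies to $\yes$ and that all computations of such systems halt. Your explicit verification that the structural preconditions of rules of types ($\brule$), ($\crule$) and ($\erule$) persist in the replayed configurations, and your uniform treatment of the recogniser case (where the paper instead simply notes that recognisers produce $\yes$ exactly at the halting step), are careful elaborations of steps the paper's shorter proof leaves implicit, not a genuinely different argument.
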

  \begin{proof}
  By hypothesis we know that there is a sequence of $t$ rules $r_1,r_2,\ldots, r_t$
    that can be applied to  $\membraneConfig^{\emptyset}_i$ to get $\yes$ in the environment. 
  We apply Lemma~\ref{lem:one_step} $t$ times,
    first to configuration $\membraneConfig_i$ with $r = r_1$,
    then to $\membraneConfig_{i+1}$ with $r = r_2$,
    and so on until we get configuration $\membraneConfig_{i+t}$ which contains $\yes$ in the environment.  

  If $\Pi$ is a recogniser system then we are done:
    recogniser systems produce $\yes$ \emph{in the halting step}. %
  If $\Pi$ is a  \emphatic or acknowledger membrane system we add the fact  
    (from Section~\ref{sec:recog}) that no rules can be applied to the object $\yes$,
    and since there is a computation where $\yes$ is in the environment at  
    configuration $\membraneConfig_{i+t}$,
    then it remains there until the computation eventually halts.
  \end{proof}
  
  Lemma~\ref{lem:many_steps} shows that the kind of membrane systems studied in
  this paper intuitively exhibit some notion of context-freeness. Essentially, there is a sense in which an object $o_s$ can be said to trigger a sequence of rules that eventually result in the production of object $o_t$ on some computation, and specifically, the production of $o_t$ can not be prevented by starting over with more objects (more context) in the system. Hence, the ideas used in the proof of
   Lemma~\ref{lem:many_steps} justify the use of the following definition in our proofs. 
  
    \begin{definition}[Eventually evolves]
    \label{def:eventually_evolves}
    Let~$\membraneConfig_s$ be a configuration of a membrane system~$\Pi$, containing an object of type~$o_s$ in a membrane labelled~$h_s$ (along with any number of other objects and membranes). 
    Let~$\membraneConfig_s^{\emptyset}$ denote $\membraneConfig_s$ with all objects removed except one instance of $o_s$ in the relevant membrane with label $h$. 
    We say that $o_s$ in~$h_s$ in configuration $\membraneConfig_s$   
    \emph{eventually evolves on some computation path},
      or for short \emph{eventually evolves},
     object type~$o_t$ in a membrane labelled~$h_t$
    if there is a computation (sequence of configurations) starting from 
    $\membraneConfig_s^{\emptyset}$
    where the final configuration in the computation has  
    object type~$o_t$ in a membrane labelled~$h_t$.
  \end{definition}
  Note that if $o_s$ in $h_s$ in $\membraneConfig_s$ eventually evolves $\yes$ in $\env$
    this means that by Lemma~\ref{lem:many_steps} 
    there is at least one computation (sequence of configurations) that leads to a configuration with $\yes$ in $\env$ from~$\membraneConfig_s$. 
  However, since membrane systems are nondeterministic, this does not necessarily happen for all computations. 

  \section{Uniformity is strictly weaker than semi-uniformity}
  \label{sec:uni_noteq_semi} 
  Theorem~\ref{lem:uni_less_semi} proves that uniform families of membrane systems 
  are strictly weaker than semi-uniform families of the
  same type. The result holds for all three definitions of
  acknowledger, \emphatic and recogniser membrane systems.
  
  \begin{lemma}
    \label{lem:uni_in_nonuniAC}
    $\UniformMembrane \subseteq \text{non-uniform-}\AC^0$,
      for acknowledger, \emphatic and recogniser membrane systems.
  \end{lemma}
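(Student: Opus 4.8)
The plan is to exploit the \emph{context-freeness} of $\AMowoDiss$ systems that is exposed by Lemma~\ref{lem:many_steps}: because every rule in Definition~\ref{def:membrane} has a single object on its left-hand side, acceptance of $\Pi_{n}(\encoder(w))$ is governed entirely by whether \emph{some single object} in the initial configuration can, on its own, eventually evolve a $\yes$ into the environment. Concretely, I would first prove the characterisation that $\Pi_{n}(\encoder(w))$ accepts if and only if some object $o$ sitting in a membrane with label $h$ in the initial configuration eventually evolves $\yes$ in $\env$ (in the sense of Definition~\ref{def:eventually_evolves}). The crucial observation is then that the predicate ``$o$ in $h$ eventually evolves $\yes$ in $\env$'' depends only on the fixed machine $\former(1^{n}) = \Pi_{n}$ and \emph{not} on the input word $w$; this is a reachability question about $\Pi_{n}$ that need not be computable in $\AC^{0}$, so I would bake its answers into the circuit as hardwired constants, which is exactly where the non-uniformity of the target class is spent.

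For the characterisation, the $(\Leftarrow)$ direction is immediate from Lemma~\ref{lem:many_steps} together with confluence: if one initial object eventually evolves $\yes$, then the full initial configuration has a halting computation with $\yes$ in $\env$, so at least one computation accepts, and by confluence the system accepts. The $(\Rightarrow)$ direction carries the weight of the argument. Given an accepting computation, I would trace the lineage of a $\yes$ object in $\env$ back through the rules that produced it to a single object $o_{s}$ present in the initial configuration, using the single-object left-hand sides to guarantee that this lineage is a well-defined chain $o_{s} \to \cdots \to \yes$; I would then replay precisely this chain of rules starting from the singleton configuration containing only $o_{s}$, concluding that $o_{s}$ in $h_{s}$ eventually evolves $\yes$ in $\env$.

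With the characterisation in hand the circuit $D_{n}$ is routine. The initial objects of $\Pi_{n}(\encoder(w))$ are the fixed objects placed by $\former(1^{n})$ together with the input objects of $\encoder(w)$ in the designated input membrane (label $h_{\mathrm{in}}$); since $\former,\encoder \in \FAC^{0}$, there are only polynomially many candidate object types, and each bit $x_{o}(w)$ recording whether $o$ occurs in $\encoder(w)$ is computable by a constant-depth polynomial-size subcircuit. Letting $b_{n,o,h}$ be the hardwired, $w$-independent bit recording whether $o$ in $h$ eventually evolves $\yes$ in $\env$, and $A_{n}$ the constant recording whether some fixed machine object does so, I would set
\[
D_{n}(w) \;=\; A_{n} \;\vee\; \bigvee_{o \in I}\bigl(x_{o}(w) \wedge b_{n,o,h_{\mathrm{in}}}\bigr),
\]
where $I$ is the set of input object types. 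This is a constant-depth, polynomial-size circuit, so the decided language lies in non-uniform-$\AC^{0}$. The same circuit serves all three acceptance modes, since for a valid recogniser, \emphatic, or acknowledger family confluence forces all computations to agree and in each mode ``accepts'' is equivalent to ``$\yes$ appears in $\env$'', so the $\no$ objects need no separate treatment.

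The hard part will be the replayability step inside the $(\Rightarrow)$ direction: I must ensure that the traced-back chain of rules, which originally ran in the presence of many other objects and inside a membrane structure shaped by divisions triggered by those other objects, is still applicable once all other objects are deleted. The delicate cases are communication-in, which needs a child membrane of the right label, and elementary division, which needs the host membrane to be elementary. Here I would lean on two structural invariants of $\AMowoDiss$ systems: there is no dissolution, so membranes are never destroyed; and division only duplicates a membrane while preserving its label and its sub-structure. Consequently the available parent-label/child-label relations, and the elementary-versus-non-elementary status of each label, are invariant throughout every computation and are already fixed by the initial structure of $\Pi_{n}$; thus any membrane context a lineage rule required in the full computation is still present in the singleton computation, and the chain can indeed be replayed.
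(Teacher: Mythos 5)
Your proposal is correct and takes essentially the same approach as the paper: a non-uniform constant-depth circuit that computes $\encoder(w)$, tests membership of each candidate object type in the input multiset, and combines these via $\ANDgate$ gates with hardwired ``eventually evolves $\yes$ in $\env$'' constants (plus constants for the input-independent initial objects of $\Pi_n$) feeding a single $\ORgate$, with correctness resting on confluence and Lemma~\ref{lem:many_steps}. Your explicit lineage-tracing and replayability argument for the $(\Rightarrow)$ direction is in fact more detailed than the paper, which asserts this step by appeal to context-freeness (cf.\ the dependency-graph remark in Lemma~\ref{lem:PMCwodiss-sub-NL}), and your structural invariants (no dissolution, label-preserving division) are exactly the right justification.
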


  \begin{proof}
    Let $L \in \UniformMembrane$, 
      and let
      $\membraneFamily$ be the $\FAC^0$-uniform family of that type that decides $L$. 
    That is, given $w \in \{0,1\}^\ast$ there is a membrane system 
      $\Pi_{|w|} \in \membraneFamily$ that accepts $\encoder(w)$ iff $w \in L$. 

    We now describe a non-uniform family of constant-depth circuits
    ${\mathfrak{C}} = \{ C_n \mid n \in \mathbb{N}$ $ \text{ and } C_n \text{
      accepts }$ $  L \ \cap \{ 0,1\}^n \}$
      that recognizes $L$.
    For any input $w\in \{ 0,1\}^*$, we claim that circuit $C_{|w|} \in
      {\mathfrak{C}}$ decides whether or not $w \in L$.
    The first constant number of layers of the circuit $C_{|w|}$ compute
      the input encoding function $\encoder(w) \in \FAC^0$. 
    This generates a polynomial (in $|w|$) number of binary words that encode
      elements from the polynomially sized object set $\setOfObjects$  as well as their multiplicities (in unary).
      
    The  circuit $C_{|w|}$ then converts 
      the list of encoded $\encoder(w)$ objects into a
      single binary string $\chi$ of length~$|\chi| = |O|$ 
        such that for all $i\in \{1,2,\ldots ,|\setOfObjects| \}$, 
        the $i$th bit $\chi_i = 1$ iff 
        $o_i \in \setOfObjects$ is in $ \encoder(w)$. 
   That is, $\chi$ is a characteristic sequence for $\encoder(w)$, ignoring   multiplicities. 
    
    For each $i$, the bit $\chi_i$ is wired into a unique $\ANDgate$ gate $a_i$,
      giving a total of $|O|$ $\ANDgate$ gates at this level. 
    The second input to the $\ANDgate$ gate $a_i$ is from a constant gate $c_i$,
    where  $c_i = 1$ if $o_i \in \setOfObjects$ in the input
      membrane eventually evolves (Definition~\ref{def:eventually_evolves}) to the $\yes$ object in the $\env$ membrane and $c_i = 0$ otherwise.

    The next layer contains a single $\ORgate$ gate $g$ such that  
    for each $i$, $\ANDgate$ gate $a_i$ is wired to $g$. 
    This $\ORgate$ gate  is the output gate of the circuit. 
    Also wired into the $\ORgate$ are $|\setOfObjects| \times |\setOfLabels|$
      constant gates such that gate $c_{o,h} = 1$
        if both 
          (i) $o \in \setOfObjects$ is in membrane labelled $h \in \setOfLabels$ 
            in the initial configuration of $\Pi_{|x|}$ 
         and (ii) $o$ in $h$ eventually evolves to $\yes$ in the $\env$ membrane,
        otherwise $c_{o,h}= 0$.
        
    We now argue that the above construction of $C_{|w|}$ accepts $w\in L$. 
    Recall that $\Pi_{|w|}(\encoder(w))$ is a confluent membrane system and 
    so if the computation is an accepting one, then all possible computation
    paths are accepting. 
    For a computation to be accepting, a $\yes$ object must appear in the
    $\env$ membrane. 
    Therefore at least one object in the initial configuration of 
      $\Pi_{|w|}(\encoder(w))$ must eventually evolve
       to be a $\yes$ in the
      $\env$ membrane. %
    Also $\Pi_{|w|}(\encoder(w))$ is confluent, therefore if at least one object in the initial configuration of 
      $\Pi_{|w|}(\encoder(w))$ eventually evolves $\yes$ in the
      $\env$ membrane, the system accepts. 
    Since the property of whether an object in some membrane eventually
    evolves to object $\yes$ in the $\env$ membrane depends only on 
      $R$ and $\mu$ in $\Pi_{|w|}$, and hence in turn  
      depends only on~$|w|$ (by Lemma~\ref{lem:many_steps}),  
      it can be 
      encoded (non-uniformly) in the constants $c_i$ in circuit $C_{|w|}$. 
     
    Suppose $\Pi_{|w|}$ accepts regardless of the input $\encoder(x)$. 
    In this case one of the objects, say $o$, in the initial configuration of 
    $\Pi_{|w|}$ will eventually evolve to $\yes$ in the $\env$ membrane.
    This means the relevant gate~$c_{o,h}$ will be a 1-constant gate and so the output
    $\ORgate$ will evaluate to 1 and so $C_{|w|}$ accepts regardless of input.

    Suppose $w \in L$, therefore 
      at least one of the objects in 
      in $\encoder(w)$, when placed in the input membrane of
      $\Pi_{|w|}$, yields a computation that ends with a configuration with 
       object $\yes$ in membrane $\env$. 
    In turn this implies that  
      at least one of the $\ANDgate$ gates $a_i$ 
        has inputs  $c_i = 1$ 
        and  $\chi_i = 1$ and so evaluates to 1. 
    Finally this causes the $\ORgate$ to evaluate to 1 and so $C_{|w|}$ 
    accepts input $w$. 
    
    Suppose $w \notin L$, in this case none of the objects in 
      $\encoder(w)$ will
        eventually evolve to $\yes$ in the $\env$ membrane.
    Thus any of the $a_i$ $\ANDgate$ gates that have
      a constant $c_i=1$ as input 
      will  have  $\chi_i=0$ and so will evaluate to 
      0. 
    With all 0 inputs, the output $\ORgate$ evaluates to 0 and the circuit
      rejects.  

    This circuit is of polynomial size and its depth is the sum of the 
      depths of the $\FAC^0$ encoding function (which has depth $O(1)$, by definition),
      the depth of the circuit that converts $\encoder(w)$ into $\chi$
      (which is $O(1)$ using masking and comparison), 
      and  2 for the final layer of  $\ANDgate$ gates and the single $\ORgate$ gate.
    Hence ${\mathfrak{C}}$ is a non-uniform-$\AC^0$ circuit family that recognizes $L$.
  \end{proof}

  \begin{theorem}
    \label{lem:uni_less_semi}
    $\UniformMembrane \subsetneq \SemiUniMembrane$, for acknowledger, \emphatic and recogniser membrane systems. %
  \end{theorem}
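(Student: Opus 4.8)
The plan is to exhibit a single witness language separating the two classes, namely $\Parity$. The inclusion $\UniformMembrane \subseteq \SemiUniMembrane$ is immediate for each of the three acceptance conventions: given a uniform family with functions $(\former,\encoder)$, the single map $w \mapsto \Pi_{|w|}(\encoder(w))$ that first computes $\former(1^{|w|})$ and then inserts $\encoder(w)$ into the designated input membrane is a legitimate semi-uniform encoder, and it lies in $\FAC^0$ because $\FAC^0$ is closed under composition (and $\Pi_{|w|}(\encoder(w))$ is itself a membrane system). So only strictness needs work, and it suffices to show $\Parity \in \SemiUniMembrane \setminus \UniformMembrane$.

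For $\Parity \notin \UniformMembrane$ I would simply invoke Lemma~\ref{lem:uni_in_nonuniAC}, which gives $\UniformMembrane \subseteq \text{non-uniform-}\AC^0$ for all three system types, together with the classical fact (Furst--Saxe--Sipser, Ajtai) that $\Parity \notin \text{non-uniform-}\AC^0$. This direction is essentially free given the preceding lemma.

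The substance of the argument is $\Parity \in \SemiUniMembrane$. Because a semi-uniform encoder $\semiencoder$ sees the whole word $w = w_1 \cdots w_n$ and not merely its length, it may hard-wire the bits of $w$ into the membrane structure, even though $\semiencoder$ itself cannot compute the parity (as $\Parity \notin \FAC^0$). Concretely I would have $\semiencoder(w)$ build a linear chain of nested membranes $m_1 \supset m_2 \supset \cdots \supset m_n$ inside the environment, with distinct labels $h_1,\ldots,h_n$, and place a single ``parity'' object $p_0$ in the innermost membrane $m_n$. For each $i$, depending on the known value of $w_i$, I include communication-out rules of type~($\crule$): if $w_i = 1$ the rules $\CommOutRule{p_0}{h_i}{p_1}$ and $\CommOutRule{p_1}{h_i}{p_0}$ flip the parity as the object leaves $m_i$, while if $w_i = 0$ the rules $\CommOutRule{p_0}{h_i}{p_0}$ and $\CommOutRule{p_1}{h_i}{p_1}$ preserve it. The lone object thus marches outward one membrane per step and, after $n$ steps, arrives in the environment carrying $p_j$ with $j = (\sum_i w_i) \bmod 2$; a final type-($\arule$) rewriting in the environment, $\EvolveRule{p_1}{\env}{\yes}$ and $\EvolveRule{p_0}{\env}{\no}$, emits the verdict. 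Since exactly one object is ever present, at every step exactly one rule applies, so the computation is deterministic (hence trivially confluent), halts in $\bigO{n}$ steps, and produces exactly one $\yes$ or one $\no$; it is therefore a valid recogniser system, and the description is of polynomial size with each per-membrane rule set determined by inspecting $w_i$ locally, so $\semiencoder \in \FAC^0$.

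Finally I would observe that this settles all three conventions simultaneously: a recogniser system is by definition also an \emphatic system and an acknowledger system, so the family just constructed witnesses $\Parity \in \SemiUniMembrane$ for each type (alternatively, for the acknowledger and \emphatic cases one could appeal to the $\NL$ characterisation of Theorem~\ref{thm:PMCwodiss-is-NL} and the fact that $\Parity \in \NL$). The main obstacle I anticipate is not the dynamics, which the single-object design keeps deterministic and confluent, but the bookkeeping: carefully checking that the hard-wired encoder stays within $\FAC^0$ and polynomial time, that the chosen rule types respect the restriction that the environment neither divides nor communicates objects out, and that maximal parallelism introduces no spurious branching.
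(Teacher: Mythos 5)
Your proposal is correct and follows essentially the same route as the paper: the same skeleton of inclusion-by-restriction, $\Parity \in \SemiUniMembrane$ via a semi-uniform encoder that hard-wires the bits of $w$ into a deterministic single-object parity automaton, and $\Parity \notin \UniformMembrane$ via Lemma~\ref{lem:uni_in_nonuniAC} together with Furst--Saxe--Sipser, with the recogniser construction covering all three acceptance conventions. The only difference is in the gadget: the paper realises the automaton in a single $\env$ membrane using type-($\arule$) rules over indexed objects $o_i, e_i$ (encoding progress in object subscripts), whereas you encode progress spatially, marching a two-state object out of a chain of $n$ nested membranes via type-($\crule$) rules --- both are deterministic, run in $\bigO{n}$ steps, are $\FAC^0$-constructible, and respect the restriction that the environment itself neither divides nor communicates out.
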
 
  \begin{proof}
    ($\subseteq$)
    By definition, uniform families are a restriction of semi-uniform families 
    and so\linebreak
      $\UniformMembrane \subseteq \SemiUniMembrane$.
      
    \noindent
    ($\neq$) 
    $\Parity \subseteq \{ 0,1\}^*$ is the set of binary strings that
      contain an odd number of 1s. 
    We claim that $\Parity \in \SemiUniMembrane$ for recogniser systems  (and
      hence also for acknowledger and \emphatic membrane systems).
    Let $w \in \{0,1\}$, $n = |w|$, and let $w = w_1, \ldots, w_n$.
    We will define the function $\semiencoder\colon \{0,1\}\mapsto\membraneFamily$, where  
      each $\semiencoder(w) = \Pi_w$ computes 
      $\chi_\Parity(w)$ as follows. 
    Each~$\Pi_w$ has a single membrane, $\env$, 
      the set $\setOfObjects$ contains 
            $2n + 2$ objects: $\setOfObjects = \{ o_i |  1 \leq i \leq n\} \cup \{ e_i |  1 \leq i \leq n \} \cup \{ \yes, \no \}$. 
    The initial configuration is 
      the membrane $\env$ containing a single
      object $o_1$ in $\env$ if $w_1 = 1$
      or object $e_1$ in $\env$ if $w_1 = 0$.
    The rules of $\Pi_w$ are as follows: 
      if $w_i = 1$ then
        $\EvolveRule{o_i}{\env}{e_{i+1}}$,  
        $\EvolveRule{e_i}{\env}{o_{i+1}}$ and 
      if $w_i = 0$ then
        $\EvolveRule{e_i}{\env}{e_{i+1}}$,  
        $\EvolveRule{o_i}{\env}{o_{i+1}}$. 
    There are also the rules
      $\EvolveRule{e_n}{\env}{\no}$ and 
      $\EvolveRule{o_n}{\env}{\yes}$. 

    By starting with object $o_1$ if $w_1 = 1$,  and 
    then changing between $e_i$ and $o_{i}$ if $w_i=1$, and not changing  if $w_i =0 $ at each timestep
    then we ensure that the object $o_i$ represents ``the parity of the first $i$
    bits of $w$ is odd'', and $e_i$ represents that they are even.
    Thus, $o_n$ evolves to a single $\yes$ object if there is an odd number 
    of 1s in $w$ and $e_n$ evolves to a single $\no$ if there is an even
    number of 1s in $w$. 
    
    To end we note that it is known~\cite{FSS1984p} that  
      $\Parity \notin \textrm{non-uniform-}\AC^0$. 
    Lemma~\ref{lem:uni_in_nonuniAC}  shows that\linebreak 
      $\UniformMembrane~\subseteq~\text{non-uniform-}\AC^0$,
      for acknowledger, \emphatic, and recogniser membrane systems. 
  \end{proof}

  \section{The computational power of semi-uniform families}
  \label{sec:semi_uniform}
  In prior work~\cite{MW2011}, we  have shown that   semi-uniform families of 
  \emphatic 
  membrane systems %
  characterise $\NL$.
  We  give an alternative proof here to demonstrate 
  techniques that we 
  will use in later sections for uniform families.

  \begin{theorem}[\cite{MW2011}]
    \label{thm:PMCwodiss-is-NL}
    $\ACPMCwodissSemi = \NL$,   for both acknowledger and \emphatic membrane systems. 
  \end{theorem}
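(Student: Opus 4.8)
Plan for proving $\ACPMCwodissSemi = \NL$ **(acknowledger and \emphatic systems).**

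The plan is to prove the two inclusions separately. For $\ACPMCwodissSemi \subseteq \NL$, I would exploit the context-freeness captured by Lemma~\ref{lem:many_steps} and Definition~\ref{def:eventually_evolves}. The key observation is that, in the absence of charges and dissolution, whether an object $o$ in a membrane labelled $h$ eventually evolves to $\yes$ in $\env$ depends only on $o$, $h$, the rule set $\setOfMembraneRules$ and the membrane structure $\mu$, and \emph{not} on the surrounding context (this is exactly what Lemma~\ref{lem:many_steps} buys us). Because of confluence, $\Pi_w$ accepts iff at least one object in its initial configuration eventually evolves to $\yes$ in $\env$. So I would build a directed ``evolution graph'' whose vertices are pairs $(o,h) \in \setOfObjects \times \setOfLabels$ and whose edges record, via a single rule application (as analysed in Lemma~\ref{lem:one_step}), that $(o,h)$ can produce $(o',h')$ in one step. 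This graph has polynomially many vertices, and reachability of the vertex $(\yes,\env)$ from some initial-configuration vertex is exactly a directed-reachability question, which is in $\NL$ by Savitch/Immerman-style reasoning. Since the semi-uniform encoder $\semiencoder$ is $\FAC^0 \subseteq \FL \subseteq \NL$ computable, an $\NL$ machine can first (implicitly) generate the description $\langle\Pi_w\rangle$, then construct the evolution graph and run $\NL$ reachability on it, composing $\NL$ with $\NL$ to stay in $\NL$.

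For the reverse inclusion $\NL \subseteq \ACPMCwodissSemi$, I would go through configuration graphs. Given $L \in \NL$, fix a logspace nondeterministic machine $M$ deciding $L$; by Lemma~\ref{lem:config_in_AC0} the configuration graph $G_{M,w}$, of polynomial size $\bigO{|w|^2|Q|}$, is $\FAC^0$-computable from $w$. The semi-uniform encoder $\semiencoder$ will map $w$ to a membrane system $\Pi_w$ whose objects encode the vertices of $G_{M,w}$ and whose object-rewriting rules $(\arule)$ simulate the edges: an object encoding configuration $c$ rewrites to the object(s) encoding the successor configuration(s) of $c$. Nondeterminism of $M$ is mirrored by the nondeterministic choice of rule in the membrane system, and because $G_{M,w}$ is acyclic every computation halts. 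I would arrange a single rule that rewrites the object encoding the accept configuration $c_a$ into $\yes$ in $\env$, so that some computation path produces $\yes$ iff $c_a$ is reachable from $c_s$ iff $M$ accepts $w$. Confluence for the \emphatic/acknowledger acceptance conventions must be checked: since $\yes$ (and $\no$) are never destroyed, I would ensure that $\yes$ is produced on \emph{some} path exactly when $w \in L$, which matches the acknowledger accepting condition directly, and for \emphatic systems I would additionally emit $\no$ appropriately when no accepting path exists.

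The main obstacle I expect is the $\subseteq$ direction, specifically justifying that ``eventually evolves to $\yes$ in $\env$'' is genuinely a polynomial-size, context-free reachability property that an $\NL$ machine can decide. The subtlety is that membrane \emph{division} rules of type $(\erule)$ can blow up the number of membranes exponentially, so I cannot track the actual configuration; the crux is that Lemma~\ref{lem:many_steps} lets me reason about single objects in isolation, collapsing the exponential configuration space down to the polynomial-size vertex set $\setOfObjects \times \setOfLabels$. Making this reduction rigorous — in particular handling the $(\brule)$ and $(\crule)$ communication rules, which change the membrane label component $h$ and hence require the membrane structure $\mu$ to be consulted — is where the care is needed, and I would lean heavily on the context-freeness established earlier to argue that abstracting away multiplicities and surrounding objects loses no information relevant to whether $\yes$ reaches $\env$.
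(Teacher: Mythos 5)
Your $\subseteq$ direction is essentially the paper's proof of Lemma~\ref{lem:PMCwodiss-sub-NL}: the paper's nondeterministic logspace machine walks your ``evolution graph'' on the fly, storing a single (object, membrane) pair, guessing one applicable rule per step, and recomputing $\semiencoder(x)$ piecemeal by the standard logspace-composition trick, so constructing the graph explicitly and running reachability is only a presentational variant; both arguments rest on Lemma~\ref{lem:many_steps} together with the converse claim (that an accepting system must contain some initial object whose isolated trace reaches $\yes$), which the paper also only asserts via dependency graphs rather than proving, so your level of rigour there matches the paper's.

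The reverse direction, however, has a genuine gap: confluence. If, as you write, ``nondeterminism of $M$ is mirrored by the nondeterministic choice of rule in the membrane system'' --- one type-($\arule$) rule per edge of $G_{M,w}$ --- then on $w \in L$ only \emph{some} computations of $\Pi_w$ produce $\yes$, while computations that follow a rejecting branch of $M$ halt with none. Since the class $\ACPMCwodissSemi$ is defined over confluent systems (``accepts confluently'' means \emph{all} computations accept), your claim that producing $\yes$ on some path ``matches the acknowledger accepting condition directly'' conflates acceptance of one computation with confluent acceptance of the system; the family you describe simply does not belong to the class. The paper (Lemma~\ref{lem:NL-sub-PMCwodiss}) avoids this by encoding all $k$ out-edges of a vertex $u$ into a \emph{single} rule $\EvolveRule{u}{\env}{v_1,\ldots,v_k}$, so under maximal parallelism every computation deterministically floods the entire reachable portion of the (acyclic) configuration graph, and $\yes$ appears in every computation iff $c_a$ is reachable --- confluence is automatic. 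Relatedly, your ``emit $\no$ appropriately when no accepting path exists'' for \emphatic systems is precisely the hard part and needs a mechanism: these context-free systems cannot detect the \emph{absence} of a path, so the paper runs the Immerman--Szelepcs{\'e}nyi complement machine $N_M$ alongside $M$, encoding its configuration graph $G_{N,x}$ as well, with a second initial object $c_j$ and a rule rewriting its accept configuration $c_b$ to $\no$; exactly one of $c_a$, $c_b$ is then reachable, which yields the \emphatic behaviour. Without the flooding rule and the complement-machine encoding, your $\NL \subseteq \ACPMCwodissSemi$ argument does not go through as stated.
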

  \begin{proof}
  Lemmas~\ref{lem:PMCwodiss-sub-NL} and~\ref{lem:NL-sub-PMCwodiss} give the
  proof for acknowledger and \emphatic membrane systems.
  \end{proof}

  \begin{lemma}[\cite{MW2011}]
    \label{lem:PMCwodiss-sub-NL}
    $\ACPMCwodissSemi \subseteq \NL$, for acknowledger, \emphatic and recogniser membrane systems.
  \end{lemma}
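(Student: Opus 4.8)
The plan is to decide membrane acceptance in $\NL$ by reducing it to a single directed reachability query on a polynomially-sized graph whose vertices are (object, label) pairs. Throughout, write $\Pi_w = \semiencoder(w)$ for the confluent, polynomial-time system assigned to input $w$, and recall that since $\semiencoder \in \FAC^0 \subseteq \FL$ the permissible encoding $\encodedMemSystem$---its rule set $\setOfMembraneRules$, its label tree $\mu$ (whose parent--child relations are written out explicitly), its initial multisets, and the polynomially many object types $\setOfObjects$ and labels $\setOfLabels$---can be recomputed bit-by-bit in logarithmic space on demand.

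First I would use the context-freeness of Lemma~\ref{lem:many_steps} to reduce acceptance to a purely existential statement: for each of the three conventions, $\Pi_w$ accepts $w$ if and only if at least one object present in the initial configuration \emph{eventually evolves} (Definition~\ref{def:eventually_evolves}) a $\yes$ object in the $\env$ membrane. The forward direction is exactly Lemma~\ref{lem:many_steps} combined with confluence: a lone-object computation reaching $\yes$ embeds into a computation of the full system, which is then accepting, so by confluence the system accepts (the absence of a conflicting $\no$ for \emphatic and recogniser systems is guaranteed by well-formedness). The converse follows by fixing any accepting computation and tracing the ancestry of some $\yes$ object backwards through single rule applications of types ($\arule$)--($\erule$) until reaching an object of the initial configuration.

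Next I would realise ``eventually evolves'' as reachability in a graph $G$ with vertex set $\{(\anObject,h) : \anObject\in\setOfObjects,\ h\in\setOfLabels\}$, of size $\bigO{|\setOfObjects|\cdot|\setOfLabels|}$ (polynomial in $|w|$). Each rule contributes edges depending only on object types and labels: $\EvolveRule{o}{h}{w}$ gives $(o,h)\to(o',h)$ for every $o'$ in $w$; $\CommInRule{o}{h}{u}$ gives $(o,\parent(h))\to(u,h)$; $\CommOutRule{o}{h}{u}$ gives $(o,h)\to(u,\parent(h))$; and $\DivideRule{o}{h}{u}{v}$ gives $(o,h)\to(u,h)$ and $(o,h)\to(v,h)$. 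The point that makes this faithful is that elementary membrane division merely duplicates a membrane while preserving its label and its parent--child label relations, so the label-level adjacency used by the communication rules is fixed by the static structure $\mu$ and is never altered by the dynamics; consequently a directed path from $(\anObject,h)$ to $(\yes,\env)$ in $G$ exists if and only if $\anObject$ in $h$ eventually evolves $\yes$ in $\env$.

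Finally I would observe that the predicates ``$(o,h)\to(o',h')$ is an edge of $G$'' and ``$(o,h)$ occurs in the initial configuration'' are logspace-decidable over $\bigO{\log|w|}$-bit vertex names, by inspecting $\setOfMembraneRules$ and $\mu$ in the logspace-recomputable encoding $\encodedMemSystem$. The $\NL$ algorithm then guesses an initial pair and a directed path of length at most $|\setOfObjects|\cdot|\setOfLabels|$ to $(\yes,\env)$, storing only the current vertex and a step counter; by the two equivalences above this accepts iff $\Pi_w$ accepts, yielding $\SemiUniMembrane \subseteq \NL$ for all three system types. I expect the main obstacle to be making the reduction from the full maximally-parallel, multi-object, dividing dynamics down to single-object reachability fully rigorous---in particular the backward ancestry-tracing argument and the claim that division introduces no new label adjacencies---whereas the surrounding logspace bookkeeping is routine.
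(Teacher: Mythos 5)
Your proposal is correct and is essentially the paper's own argument: the paper's $\NL$ machine stores a single (object, membrane) pair and non-deterministically applies one rule per step, which is exactly a walk in your explicit $(\anObject,h)$ dependency graph, with correctness resting on the same reduction to single-object ``eventually evolves'' via Lemma~\ref{lem:many_steps} and on logspace recomputation of $\semiencoder(x)$. The only difference is cosmetic: you make the graph and the backward ancestry-tracing explicit, where the paper simulates directly and delegates that converse step to the dependency-graph technique of~\cite{NJNC2006p}.
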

    \begin{proof}
      Let~$\membraneFamily$ be a semi-uniform family 
        of acknowledger, \emphatic or recogniser membrane systems 
        that recognises~$L \in \ACPMCwodissSemi$. 
      Let $\semiencoder\colon \{0,1\}^* \mapsto \membraneFamily$ 
        be the semi-uniformity function of $\membraneFamily$, 
        that is, on input $x \in \{0,1\}^*$, $\Pi_x = \semiencoder(x)$ accepts iff
        $x \in L$.
      We present a non-deterministic logspace Turing
        machine~$M$ that recognises~$L$.

      The computation of $M$ proceeds as follows:
      First $M$, on input $x$, non-deterministically chooses a single object from 
        $\membraneConfig_1$, the initial configuration of $\Pi_x$, and stores 
        (a string representation of) the object and its containing membrane on its work tape. 
      Then $M$ enters a loop where at each iteration it
        non-deterministically
        chooses one of the rules applicable to the object on its work tape. 
      If the rule is of type~($\arule$) or~($\erule$) (Definition~\ref{def:membrane}) then 
        $M$ replaces the current object on the work tape (the membrane remains
        unchanged)
        with a non-deterministically chosen object from  
        the right hand side of the rule. 
      If the rule is of type ($\brule$) or ($\crule$) then
        the object on the work tape
        is replaced by the object on the right hand side of the rule 
        and the membrane on the work tape is replaced by 
         the parent (type (\crule)) or child membrane (type (\brule))
         of the current membrane. %
      If during the computation the work tape is found to store the
        object $\yes$ in the $\env$ membrane then 
        $M(x)$ halts and accepts. 
      Otherwise, if there are no rules applicable to the object and membrane on the 
        work tape, and it is not $\yes$ in $\env$, then $M(x)$ halts and rejects. 

      Suppose that $x \in L$ and so $\Pi_x = \semiencoder(x)$ accepts. 
      This implies that there is one (or more) 
        objects in the initial configuration of $\Pi_x$ that will, 
        by the application of rules to this object and its successors, 
        become the object $\yes$ in the $\env$ membrane 
        by the end of the computation of $\Pi_x$ 
        (this claim follows from the kind of rules we allow---they are essentially context free---and can be formally proven using dependency graphs~\cite{NJNC2006p}).
      Indeed, this observation holds for all three kinds of membrane systems:  acknowledger, \emphatic and recogniser. 
      By non-deterministically choosing an object in the initial 
        configuration, and non-deterministically choosing the 
        rules that are applied to this object and its successors we ensure that 
        there is a computation of $M(x)$ for each possible 
          sequence of rule applications of $\Pi_x$ for 
          each object in the initial configuration $\Pi_x$ 
          (this follows from Lemma~\ref{lem:many_steps}). 
      Therefore at least one computation of $M(x)$ will produce the object
        $\yes$ in the $\env$ membrane and so $M(x)$ accepts, by confluence. 
      That is, if $\Pi_x$ accepts then $M$ accepts on input~$x$. 

      Suppose that $x \notin L$ and so $\Pi_x = \semiencoder(x)$ rejects.
      This implies that 
          there is no valid computation of~$\Pi_x$ 
          where an object in the initial configuration evolves to $\yes$ in the
          $\env$ membrane.
                  Indeed, this observation holds for all three kinds of membrane systems:  acknowledger, \emphatic and recogniser. 
      In this case \emph{all} computation branches of $M(x)$ will reach an
        object to which no further rules are 
        applicable (that is not $\yes$) and so will halt in the rejecting state. 
      That is, if $\Pi_x$ rejects then $M$ rejects on input $x$.

      To simulate the computation of $\Pi_x$ in logspace, $M(x)$ 
        recomputes relevant logarithmic sized pieces of $\semiencoder(x) = \Pi_x$ via the classic technique for composing logspace algorithms  (see Chapter 4.3 of~\cite{AB2009x}) each time it needs information about 
        $\Pi_x$, i.e.\ initial configuration, rules, or membrane structure. 
      From the statement, $\semiencoder$ is computable in $\FAC^0$. 
        This means that the number of unique objects and labels in $\Pi_x$
        are polynomial in $n = |x|$ and so each can be
        uniquely identified in binary with a string of length $\log n$. 
     $M(x)$ uses a constant number of $\log n$ sized binary strings 
        to encode the current object and membrane, as well as 
        some counters and temporary storage needed to re-compute
        $\semiencoder(x)$. 
      
      Therefore $L$ is  decided by a
      non-deterministic logspace Turing machine.
  \end{proof}

      \begin{lemma}[\cite{MW2011}]
        \label{lem:NL-sub-PMCwodiss}
         $\NL \subseteq \ACPMCwodissSemi$, for acknowledger and \emphatic membrane systems.
      \end{lemma}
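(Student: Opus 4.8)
The plan is to reduce the question ``$x \in L$?'' to reachability in the configuration graph of a fixed nondeterministic logspace machine, and to simulate that reachability inside a single-membrane system using only object-rewriting (type~$(\arule)$) rules. The crucial point is that reachability can be realised by a \emph{deterministic} (and hence trivially confluent) maximally parallel computation, which is what lets the construction meet the confluence requirement of a recogniser family.

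Fix a nondeterministic logspace machine $M$ deciding $L$, with unique initial configuration $c_s$ and accept configuration $c_a$. Since $M$ never loops, its configuration graph $G_{M,x}$ is acyclic, has polynomially many vertices and $\bigO{1}$ out-degree, and by Lemma~\ref{lem:config_in_AC0} is $\FAC^0$-computable from $x$. I would let $\semiencoder(x) = \Pi_x$ be the system with the single membrane $\env$, object set the vertices of $G_{M,x}$ together with $\yes$ and $\no$, initial contents the single object $c_s$ in $\env$, a rule $\EvolveRule{c}{\env}{c_1\cdots c_k}$ for every non-accept vertex $c$ with successors $c_1,\dots,c_k$, and the rule $\EvolveRule{c_a}{\env}{\yes}$. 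Because every object type has at most one applicable rule, the maximally parallel computation is deterministic, hence confluent; starting from $c_s$ the object multiset floods $G_{M,x}$ forward, so a $\yes$ object is created (and then persists, since no rule applies to it) exactly when $c_a$ is reachable from $c_s$, i.e.\ exactly when $x \in L$. The number of steps is at most the longest path length of the acyclic graph, which is polynomial, and $\semiencoder \in \FAC^0$ by Lemma~\ref{lem:config_in_AC0}. This settles the acknowledger case, for which rejection means only ``zero copies of $\yes$''.

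For the \emphatic case I must in addition produce a $\no$ precisely when $x \notin L$, while producing none when $x \in L$. This is the main obstacle: with neither charges nor dissolution the rules are essentially monotone, so there is no direct way to detect that $c_a$ is \emph{un}reachable. I would resolve it with the Immerman--Szelepcsényi theorem, that $\NL$ is closed under complement: fix a second logspace machine $M'$ deciding $\overline{L}$, build its graph $G_{M',x}$ over a disjoint object namespace by Lemma~\ref{lem:config_in_AC0}, and add the analogous flood rules together with $\EvolveRule{c'_a}{\env}{\no}$ and a second start object $c'_s$ in $\env$. The two floods run independently in $\env$ (their object sets and rules are disjoint), the whole system stays deterministic and confluent, and exactly one of $c_a$, $c'_a$ is reachable. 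Hence $\Pi_x$ produces one or more $\yes$ and no $\no$ when $x \in L$, and one or more $\no$ and no $\yes$ when $x \notin L$, which is exactly the \emphatic acceptance condition.

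The flood simulation, the polynomial time bound, and the $\FAC^0$ estimate for $\semiencoder$ are all routine given Lemma~\ref{lem:config_in_AC0}. The only genuinely delicate step is manufacturing a rejection witness in a monotone model; replacing the test ``$c_a$ unreachable in $G_{M,x}$'' by the test ``$c'_a$ reachable in $G_{M',x}$'' via closure of $\NL$ under complement is the idea that makes the \emphatic case work, and it also explains why the acknowledger case, needing only an acceptance witness, goes through without any appeal to complementation.
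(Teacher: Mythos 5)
Your proposal is correct and matches the paper's proof in all essentials: you encode the configuration graph's edges as type~($\arule$) rewriting rules that flood forward from the start object, map the accept configuration to $\yes$, and invoke Immerman--Szelepcs\'enyi to obtain a second flooded graph whose accept configuration maps to $\no$, exactly as the paper does with its machine $N_M$ for unreachability. The only (cosmetic) difference is that you treat the acknowledger case separately with a one-sided construction before adding the complement machine, while the paper builds the two-sided \emphatic system once and observes it serves for acknowledgers as well.
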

      \begin{proof}
        Let $L \in \NL$.
        That is,
          there is a 
            non-deterministic logspace Turing machine~$M$
            with one or more accepting computation paths exactly
              for input words $x \in L \subseteq \{ 0, 1 \}^*$.
        
      We show that there is 
        an $\FAC^0$ semi-uniform  family of polynomial-time membrane systems 
        $\membraneFamily$ that recognises $L$.
      We now describe a function $\semiencoder\colon \{0,1\}^* \rightarrow \membraneFamily$,
          computable in $\FAC^0$,
            such that 
              if $x \in L$ then
                $\semiencoder(x) = \Pi_x  $ accepts,
              otherwise 
                $\Pi_x$ rejects. 

      Consider the configuration graph $G_{M,x}$ for~$M$ on input $x\in\{0,1\}^*$,
          which is $\FAC^0$ computable from~$M$ and $x$ (see
          Section~\ref{sec:config_graphs} and Lemma~\ref{lem:config_in_AC0}).
      Also consider the Turing machine $N_M$ 
        (and its configuration graph $G_{N,x}$)
          that on input $x$ 
          accepts only if all computations of $M$ reject on input $x$, 
          that is, $x \notin L$.
      $N_M$ uses the standard 
            un-reachability algorithm~\cite{Imm1988p,Sze1988p} for
            non-deterministic logspace.

        The function $\semiencoder(x)$ constructs the configuration graph
          $G_{M,x}$ and modifies it to produce a membrane system
          $\Pi_{x}$ as follows. 
        $\setOfObjects$, the set of unique objects of $\Pi_{x}$ 
          has an object encoding each vertex in the configuration
          graphs $G_{M,x}$ and $G_{N,x}$ as well as two extra objects,
          $\yes$ and $\no$.
        The initial configuration of $\Pi_{x}$ has a single membrane
          labelled $\env$ that contains two objects:
            $c_i$ which encodes the initial configuration of $M(x)$; and 
            $c_j$ which encodes the initial configuration of $N_M(x)$. 
        The edges of the configuration graphs $G_{M,x}$ and $G_{N,x}$ 
          are encoded as object 
          rewriting rules in the membrane system. 
        If vertex $u$ has $k$ edges to vertices  $v_1, \ldots, v_k$ then 
          $\semiencoder(x)$ 
        encodes all $k$ edges as a single type~($a$) rule:
        $\EvolveRule{u}{\env}{v_1,\ldots,v_k}$.
        Let vertex (object)~$c_a$ encode the accepting configuration of the
          Turing machine~$M$, and let $\semiencoder(x)$ include the rule
          $\EvolveRule{c_a}{\env}{\yes}$.
        Likewise for the vertex (object) $c_b$ that encodes an accepting
          configuration of the Turing machine $N_M$, $\semiencoder(x)$ includes
          the rule $\EvolveRule{c_b}{\env}{\no}$.

      We now argue that each member
        $\Pi_{x} = \semiencoder(x)$
        of the semi-uniform family $\membraneFamily$,
        accepts iff $x \in L$. 

      Suppose $x \in L$,
        therefore Turing machine $M(x)$ accepts.
      This implies that 
        configuration graph $G_{M,x}$  has the property that there
        is a
        directed path from the vertex $c_i$ representing the initial 
        configuration, to the accept vertex $c_a$.
      The assumption also implies that $N_M(x)$ must reject, 
        and so configuration graph $G_{N,x}$ does not 
        have a directed path from 
          the object $c_j$ encoding its initial configuration 
          to $c_b$, its accept configuration. 
      Since $\Pi_x = \semiencoder(x)$ directly encodes
        the configuration graphs as objects and rules 
        then the existence of a path from $c_i$ to $c_a$ implies that 
        the membrane system will produce the object $\yes$ during its
        computation. 
      The absence of a path from $c_j$ to $c_b$ implies that 
        the membrane system will not produce the object $\no$ during its
        computation. 
      Therefore $\Pi_x$ accepts if $x \in L$.

      Suppose $x \notin L$,
        therefore no computation paths of Turing machine $M(x)$ accept.
      This implies that 
        configuration graph $G_{M,x}$  has the property that there
        is no 
        directed path from the vertex $c_i$ representing the initial
        configuration, to the accept vertex $c_a$.
      The assumption also implies that $N_M(x)$ must accept, 
        and so configuration graph $G_{N,x}$ 
        has a directed path from 
          the object $c_j$ encoding its initial configuration 
          to $c_b$, its accept configuration. 
      Since $\Pi_x = \semiencoder(x)$ directly encodes
        the configuration graphs as objects and rules 
        then the existence of a path from $c_j$ to $c_b$ implies that 
        the membrane system will produce the object $\no$ during its
        computation. 
      The absence of a path from $c_i$ to $c_a$ implies that 
        the membrane system will not produce the object $\yes$ during its
        computation. 
      Therefore $\Pi_x$ rejects if $x \notin L$.

      Since each configuration graph is acyclic and has $p(|x|)$ nodes where 
        $p$ is some polynomial function,
        it follows that the membrane system itself is of polynomial
          size and halts in polynomial time.
      The configuration graph can be computed in $\FAC^0$ by
      Lemma~\ref{lem:config_in_AC0}.
        
      In conclusion, function $\semiencoder$
        defines a semi-uniform family of polynomial time
        $\AMowoDiss$ \emphatic
          (and so also acknowledger)
            membrane systems that accept the language in $L \in \NL$.
    \end{proof}

    Note that the above proof fails for recogniser membrane systems since
    if there is more than one accepting computation (or in the rejecting case, 
      more than one rejecting computation) then multiple copies of the object 
    $\yes$ (or $\no$) are produced in violation of the definition of
    recogniser membrane systems. 
    
  \section{The computational  power of uniform families of acknowledger membrane systems}
  \label{sec:membrane-uni}
  
  In this section we focus on acknowledger membrane systems (Definition~\ref{def:acknowledger}) where 
  the accepting condition is met by the presence of one or more  $\yes$ object in the environment in the last step of a computation, and the absence of $\yes$  implies rejection. 
  We give a characterisation of uniform families of 
  acknowledger membrane systems: %

  \begin{theorem}
    \label{thm:uni_is_dtt_tallyNL}
    $\UniformMembrane = \Reducible{dtt}{\FAC^0}{\tally\NL}$, for acknowledger membrane systems.
  \end{theorem}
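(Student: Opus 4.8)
The plan is to prove the two inclusions $\UniformMembrane \subseteq \Reducible{dtt}{\FAC^0}{\tally\NL}$ and $\Reducible{dtt}{\FAC^0}{\tally\NL} \subseteq \UniformMembrane$ separately. The conceptual engine behind both is the observation, already exploited in Lemma~\ref{lem:uni_in_nonuniAC}, that an acknowledger system $\Pi_{n}(\encoder(w))$ accepts if and only if \emph{some} object present in its initial configuration eventually evolves (Definition~\ref{def:eventually_evolves}) to $\yes$ in $\env$; this follows from confluence together with Lemma~\ref{lem:many_steps}. The initial configuration splits into a part fixed by $\former(1^{n})$ (independent of $w$) and the input multiset $\encoder(w)$ placed in the designated input membrane $h_{\mathrm{in}}$. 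Crucially, whether a given object $o$ in a given membrane label $h$ eventually evolves $\yes$ is a \emph{reachability} question in a graph whose vertices are $(\text{object},\text{label})$ pairs and whose edges are induced by the (essentially context-free) rules of types $(\arule)$--$(\erule)$ and the label-tree of $\mu$ (no dissolution means membranes, hence the label tree, persist, so communication rules act purely on labels). This graph has polynomially many vertices and depends only on $n$, so acceptance is a disjunction, over the objects of the initial configuration, of reachability queries that are themselves \emph{unary} in the sense that they depend only on $n$ and not on $w$.

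For $\subseteq$, I would fix the uniform family $\membraneFamily$ deciding $L$ and define a tally language $T$: let $T$ contain $1^{m}$ exactly when $m = \langle n,i\rangle$ under a fixed polynomially-bounded, logspace-invertible pairing, and either $i \geq 1$ and the $i$-th object of $\former(1^{n})$'s alphabet, placed in $h_{\mathrm{in}}$, eventually evolves $\yes$ in $\env$, or $i=0$ and some fixed object of $\Pi_{n}$'s initial configuration eventually evolves $\yes$. Membership $1^{m}\in T$ is decided in $\NL$ by decoding $(n,i)$, recomputing the relevant logarithmic-sized pieces of $\former(1^{n})\in\FAC^0\subseteq\FL$, and running $\NL$ reachability in the polynomial-size $(\text{object},\text{label})$ graph; hence $T\in\tally\NL$. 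The reduction $\tau(w)$ runs $\encoder\in\FAC^0$ and outputs the query $1^{\langle n,0\rangle}$ together with $1^{\langle n,i\rangle}$ for each object $o_i$ present in $\encoder(w)$; with $\sigma$ the disjunction $\bigvee_i\chi_T(a_i)$ this is a disjunctive truth-table reduction in $\FAC^0$, and by the opening observation its value is $1$ iff $\Pi_{n}(\encoder(w))$ accepts iff $w\in L$.

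For $\supseteq$, suppose $L \Reduces{dtt}{\FAC^0} T$ with $T\in\tally\NL$ decided by an $\NL$-machine $M$, so $w\in L$ iff some string produced by $\tau(w)$ lies in $T$; since $T$ is tally, only unary queries $1^{m}$ with $m\le p(n)$ for a polynomial $p$ can matter. I would let $\former(1^{n})$ assemble, for each $m\in\{0,1,\dots,p(n)\}$, a dormant gadget $G_m$ built (in $\FAC^0$ by Lemma~\ref{lem:config_in_AC0}) from the configuration graph of $M$ on input $1^{m}$, exactly as in Lemma~\ref{lem:NL-sub-PMCwodiss} but keeping only the $\yes$-producing part, so that once triggered by an activation object $\mathrm{act}_m$ it produces $\yes$ in $\env$ iff $1^{m}\in T$. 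Using disjoint, $m$-indexed object alphabets the gadgets do not interfere, the total size is polynomial, and each halts in polynomial time. The encoder $\encoder(w)$ runs $\tau\in\FAC^0$, discards non-unary and over-long queries, and places $\mathrm{act}_m$ in $h_{\mathrm{in}}$ for each remaining unary query $1^{m}$; this is $\FAC^0$-computable. A $\yes$ then appears in $\env$ iff some activated gadget accepts iff some queried $1^{m}\in T$ iff $w\in L$, and the resulting acknowledger family is confluent, giving $L\in\UniformMembrane$.

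The step I expect to be the main obstacle is the $\supseteq$ direction, specifically verifying that the whole construction stays within $\FAC^0$ uniformity and polynomial size while correctly gluing the activation mechanism to the configuration-graph gadgets: one must check that the pairing and decoding for the tally queries, the detection of which unary queries $\tau$ emits, and the parallel assembly of all $p(n)$ gadgets are simultaneously $\FAC^0$, and that confluence is preserved (which holds because each gadget's rules perform a deterministic breadth-first exploration of a configuration graph and $\yes$ is never consumed). A secondary, more routine, care point is the choice of pairing in the $\subseteq$ direction: it must keep $\langle n,i\rangle$ polynomial in $n$, so that the unary queries are short enough to be $\FAC^0$-output, while remaining logspace-invertible for the $\NL$ membership test.
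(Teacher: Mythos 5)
Your proposal is correct and follows essentially the same route as the paper, which likewise splits the theorem into Lemmas~\ref{lem:unimem_subset_dttTallyNL} and~\ref{lem:dttTallyNL_subset_uniMem}: the forward direction via the tally language of words $1^{\langle o,|x|\rangle}$ decided by an $\NL$ machine that tracks a single (object, membrane) pair while recomputing $\former(1^{|x|})$, and the reverse direction by compiling the configuration graphs of $\tallyMachine$ on every input $1^{u}$, $1 \leq u \leq q'(|x|)$, into type-($\arule$) rules triggered by initial-configuration objects $c_{i,u}$ that $\encoder$ emits --- exactly your ``dormant gadgets'' with per-$u$ disjoint alphabets. Your one small departure, the extra query $1^{\langle n,0\rangle}$ covering input-independent acceptance, is a sensible refinement of a case the paper's Lemma~\ref{lem:unimem_subset_dttTallyNL} handles only implicitly (it appears explicitly as the constant gates $c_{o,h}$ in the proof of Lemma~\ref{lem:uni_in_nonuniAC}).
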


  The proof of this result is the combination of 
    Lemmas~\ref{lem:unimem_subset_dttTallyNL}
    and~\ref{lem:dttTallyNL_subset_uniMem}. 
  Before giving the lemmas we first introduce the following $\FAC^0$
  computable functions that will be used in the proofs. 

    \paragraph{Pairing function}
    We require an injective function that pairs two binary strings into one and is extremely easy
    ($\FAC^0$) to compute.  We use the pairing function
    that interleaves the bits of two binary string arguments $a$ and $b$. For
    example, the binary strings $a=a_2 a_1 a_0$ and $b=b_2 b_1 b_0$ are paired
    as the interleaved string $\langle a,b \rangle =
    b_2 a_2 b_1 a_1 b_0 a_0$.
    The circuits for interleaving and de-interleaving have only a single input gate layer and
    a single output gate layer (and so have 2 layers). The wiring between
    each input and output gate can be shown to be
    $\DLOGTIME$-uniform. 

    \paragraph{Binary to Unary}
    There is a 
      constant depth
        circuit family where circuit $C_n$ 
        takes as input some word  $w\in \{0,1\}^n$ 
        and outputs $1^x$ where $x$ is the positive integer encoded in the 
        first $\lceil\log_2 n \rceil$ bits of $w$~\cite{CSV1984}. 
    It can be shown that this circuit family is  
      $\DLOGTIME$ uniform and so this conversion from short binary strings to unary is in $\FAC^0$. 

    \paragraph{Unary to Binary}
    There is a 
      constant depth
        circuit family where circuit $C_n$ 
        takes as input some word $w = 0^{n-x} 1^x$ where $0 \leq x \leq n $,
        and outputs the binary encoding of $x$~\cite{CSV1984}. 
    It can be shown that this circuit family is $\DLOGTIME$ uniform and so 
      unary to binary conversion is in $\FAC^0$.

  \begin{lemma} 
     $\UniformMembrane \subseteq
     \Reducible{dtt}{\FAC^0}{\tally\NL}$, for acknowledger, \emphatic and recogniser membrane systems.
     \label{lem:unimem_subset_dttTallyNL}
  \end{lemma}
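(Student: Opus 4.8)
The plan is to mirror the construction in the proof of Lemma~\ref{lem:uni_in_nonuniAC}, but instead of hard-wiring the ``eventually evolves'' predicate into non-uniform constants, I would encode it as membership queries to a single tally language. Recall from Lemma~\ref{lem:uni_in_nonuniAC} (which in turn rests on Lemma~\ref{lem:many_steps} and confluence) that a confluent system $\Pi_{|w|}(\encoder(w))$ accepts if and only if at least one object present in its initial configuration eventually evolves (Definition~\ref{def:eventually_evolves}) to the object $\yes$ in the $\env$ membrane. Since this relation depends only on the rule set $\setOfMembraneRules$ and membrane structure $\mu$ of $\Pi_{|w|}$ (and hence only on $n = |w|$), it is a reachability property on the polynomial-size directed graph whose vertices are pairs $(o,h) \in \setOfObjects \times \setOfLabels$ and whose edges are induced, via the four rule types of Definition~\ref{def:membrane}, by $\setOfMembraneRules$ and $\mu$. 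The language $L$ is therefore the disjunction, over the $(o,h)$ pairs occurring in the initial configuration, of these reachability facts --- exactly the shape of a disjunctive truth-table reduction.

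Concretely, I would define a tally language $U$ by stipulating that $1^m \in U$ exactly when, writing $m = \langle n, \langle o, h \rangle \rangle$ under the $\FAC^0$ pairing function, the object $o$ in the membrane labelled $h$ eventually evolves to $\yes$ in $\env$ within $\Pi_n = \former(1^n)$. I claim $U \in \tally\NL$. It is unary by construction; to see that it lies in $\NL$, a machine on input $1^m$ first recovers $n$ and $\langle o,h\rangle$ by the $\FAC^0$ unary-to-binary conversion, then searches for a directed path from $(o,h)$ to $(\yes, \env)$ in the eventually-evolves graph of $\Pi_n$. Because $\former \in \FAC^0 \subseteq \FL$, the machine need not store $\Pi_n$; as in Lemma~\ref{lem:PMCwodiss-sub-NL} it recomputes the individual rules and membrane adjacencies of $\Pi_n$ on demand, using that objects and labels are identifiable by $O(\log n)$-bit names. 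The path search is ordinary non-deterministic reachability, and since $m$ is polynomial in $n$ we have $O(\log m) = O(\log n)$ workspace, so $U \in \NL$.

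The reduction $\tau$ then takes $w$ (with $n = |w|$), computes the input multiset $\encoder(w)$ and the fixed initial objects of $\Pi_n$ --- both in $\FAC^0$ via $\encoder$ and $\former$ --- and, ignoring multiplicities, emits one query string $1^{\langle n, \langle o, h \rangle \rangle}$ for each $(o,h)$ pair present in the initial configuration of $\Pi_n(\encoder(w))$; the truth table $\sigma$ is the disjunction $\bigvee_i \chi_U(a_i)$. This $\tau$ is in $\FAC^0$ because pairing and binary-to-unary conversion are in $\FAC^0$ and the queries are produced independently in parallel. By the acceptance characterisation above, $w \in L$ iff at least one emitted query lies in $U$, which is precisely $L \Reduces{dtt}{\FAC^0} U$ with $U \in \tally\NL$. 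The same argument applies verbatim to recogniser and \emphatic systems, since for all three modes acceptance is equivalent to the appearance of a $\yes$ in $\env$.

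I expect the main obstacle to be the verification that $U$ genuinely lies in $\tally\NL$: one must check that a single unary query string faithfully carries all of $n$ and $(o,h)$, that the $O(\log n)$ workspace really suffices both to recompute the relevant fragments of the $\FAC^0$-described $\Pi_n$ and to run the reachability search, and --- most delicately --- that label-level reachability in the $(o,h)$ graph is sound and complete for the true ``eventually evolves'' relation even in the presence of division rules, which duplicate membranes while preserving their labels. Pinning down this last correspondence (implicitly the dependency-graph argument of the cited work) is where the care lies.
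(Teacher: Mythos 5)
Your proposal is correct and follows essentially the same route as the paper's proof: the paper likewise defines a tally language of words $1^{\langle o,|x|\rangle}$ encoding ``$o$ eventually evolves to \yes{} in \env{} within $\Pi_{|x|}$'', places it in $\tally\NL$ via a nondeterministic logspace machine that recomputes the $\FAC^0$-uniform description of $\Pi_{|x|}$ on demand (exactly the method of Lemma~\ref{lem:PMCwodiss-sub-NL}, with Lemma~\ref{lem:many_steps} justifying the single-object ``eventually evolves'' semantics you worry about), and takes $\tau$ to emit one unary query per object type with $\sigma$ the disjunction. The one substantive difference is in your favour: by pairing in the label $h$ and also querying the fixed (non-input) initial objects of $\Pi_n$, you cover the case where acceptance is triggered by a hard-wired object rather than by anything in $\encoder(w)$ --- a case the paper's $\tau$, which queries only the objects of $\encoder(x)$, silently elides, even though its circuit-level Lemma~\ref{lem:uni_in_nonuniAC} handles that same case explicitly via the constant gates $c_{o,h}$.
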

  \begin{proof}
    Let $L \in \UniformMembrane$.
    That is, there exist two functions
        $\encoder , \former \in \FAC^0$, such that~$\encoder$ maps
          $x \in \{ 0,1 \}^*$ to a multiset of membrane system objects (the input),
        and $\former$ maps 
          $u \in \{ 1 \}^*$ to a membrane system, 
          $\former(1^{|x|}) = \Pi_{|x|} \in \membraneFamily$, such that $\Pi_{|x|}$ accepts  input $\encoder(x)$
          iff $x \in L$. 

    We claim that $L$ is $\FAC^0$ disjunctive reducible to 
          a unary language $T$, 
            where $T$  is decided by a non-deterministic logspace Turing machine
            $\tallyMachine$.

    Let $T$ be the set of words of the form $1^{\langle o, |x|\rangle}$ where, for all $|x| \in \mathbb{N}$ and then for all $o \in O_{ |x| } $,  
      membrane system $\Pi_{|x|} \in \membraneFamily$ accepts if
      object $o$ in the input membrane eventually evolves to
        the object~$\yes$ in the $\env$ membrane (where $O_{ |x| }$ is the set of objects of $\Pi_{|x|}$,  where both $x$ and $o$ are encoded in binary, 
 $\langle \cdot , \cdot \rangle$ is the binary interleaving function
      defined at the start of Section~\ref{sec:membrane-uni}, and   $1^{b}$ denotes the unary word over $\{1\}$ of length $b$ for a binary number $b$).
    Turing machine $\tallyMachine$ decides words in $T$ 
      by first converting the input word to binary and then reversing the
        pairing function to find $o_i$ and $|x|$. 
    $\tallyMachine$ then proceeds by simulating $\Pi_{|x|}$ in
    non-deterministic logspace using a similar method 
        as described in Lemma~\ref{lem:PMCwodiss-sub-NL}, that is, by 
        storing a constant number of objects and membranes on its work tape
        and recomputing $\former(1^{|x|})$ as needed (the main difference is that~$\tallyMachine$ uses $o_i$ as its starting
      object instead of non-deterministically choosing one).
    As in 
      Lemma~\ref{lem:PMCwodiss-sub-NL}, $\tallyMachine$ accepts 
      if there exists a valid computation in $\Pi_{|x|}$ where $o_i$
      in the input membrane becomes $\yes$ in the $\env$ membrane. 
    $\tallyMachine$ rejects if there are no valid computations that lead to 
      a $\yes$ object in the $\env$ membrane.
    Therefore $T$ is a tally language decided by a non-deterministic logspace
    Turing machine and so  $T \in \tally\NL$. 

    We now define the function $\tau \in \FAC^0$, 
      that maps from $\{0,1\}^*$ to the set of tuples of unary words,
          and later prove that 
             if $x \in L$ then $\tau(x)  \cap T \neq \emptyset$, 
             otherwise if $x \notin L$ then $\tau(x) \cap T = \emptyset$.
    Let $\tau(x) = (u_1,\ldots, u_{q(|x|)})$,
      where $q(|x|)$ is the number of object types $o_i$ in $\encoder(x)$,  and $u_i = 1^{\langle o_i, |x| \rangle}$. 
    Note that the set of unique words in $\tau(x)$ is a bijection onto the set of 
            objects $\encoder(x)$ so $q(|x|)$ is polynomial in~$|x|$.
    Since $\encoder$, the pairing function, binary-unary
      conversions, as well as calculation of $q(|x|)$  are in $\FAC^0$, 
      it is not difficult to see that $\tau \in \FAC^0$.
    
    We now prove that $\tau$ is a disjunctive reduction from $L$ to $T$.   
    Suppose $x \in L$,
        this implies that \emph{at least one} of the objects in $\encoder(x)$, 
        when placed in the input membrane of $\Pi_{|x|}$
        evolves to a $\yes$ object in the $\env$ membrane by the end of the
        computation of $\Pi_{|x|}$. 
    Then, by the definition of $\tau$, 
      if $x \in L$
        then $\exists o \in \tau(x)$ such that $o \in T$. 
    
    Let $x \notin L$,
        this implies \emph{none} of the objects in $\encoder(x)$, 
        when placed in the input membrane of $\Pi_{|x|}$, 
        evolve to a 
        $\yes$ object in the $\env$ membrane by the end of the computation of $\Pi_{|x|}$. 
    Then, by the definition of $\tau$, 
      if $x \notin L$
        then $\nexists o \in \tau(x)$ such that $o \notin T$.
\end{proof}

  \begin{lemma}
     $\Reducible{dtt}{\FAC^0}{\tally\NL}
     \subseteq
     \UniformMembrane$, for acknowledger membrane systems.
     \label{lem:dttTallyNL_subset_uniMem}
  \end{lemma}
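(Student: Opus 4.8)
The plan is to reverse the construction of Lemma~\ref{lem:unimem_subset_dttTallyNL}, reusing the $\NL$-to-membrane simulation of Lemma~\ref{lem:NL-sub-PMCwodiss} but now driving it from the disjunctive reduction and feeding the query information in through the designated input membrane. So let $L \in \Reducible{dtt}{\FAC^0}{\tally\NL}$, witnessed by a tally language $T \in \tally\NL$ decided by a non-deterministic logspace machine~$\tallyMachine$ and an $\FAC^0$-computable $\tau$ with $\tau(x) = (a_1,\ldots,a_{\ell_x})$, where $x \in L$ iff at least one query $a_i$ lies in $T$. Since $\tau \in \FAC^0$ its output has length polynomial in $n=|x|$, so every query is a unary word $1^{k_i}$ with $0 \le k_i \le p(n)$ for a fixed polynomial $p$, and $\ell_x$ is polynomial in $n$. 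Because $T$ is tally, membership of $1^{k}$ in $T$ depends only on $k$, and $\tallyMachine$ runs in space $\bigO{\log k} = \bigO{\log n}$.

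I would let $\former(1^n)$ precompute, for \emph{every} possible query length $k \in \{0,1,\ldots,p(n)\}$, the configuration graph $G_{\tallyMachine,1^k}$, which by Lemma~\ref{lem:config_in_AC0} is $\FAC^0$ computable; running these polynomially many $\FAC^0$ computations in parallel is again $\FAC^0$. Each vertex is encoded as an object tagged with its length $k$, so graphs for distinct lengths use disjoint object sets and the total object set is polynomial in $n$. Each vertex $u$ with successors $v_1,\ldots,v_m$ becomes a single type-($\arule$) rule $\EvolveRule{\langle u,k\rangle}{\env}{\langle v_1,k\rangle \cdots \langle v_m,k\rangle}$, exactly as in Lemma~\ref{lem:NL-sub-PMCwodiss}, and the (sink) accepting configuration of $\tallyMachine$ on $1^k$ gets the rule $\EvolveRule{\langle c^{\mathrm{acc}},k\rangle}{\env}{\yes}$. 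The encoder $\encoder(x)$ computes $\tau(x)$ and, for each produced query $1^{k_i}$, deposits into the input membrane the object $\langle c^{\mathrm{init}},k_i\rangle$ for the initial configuration of $\tallyMachine$ on $1^{k_i}$; the unary-to-binary conversion needed to form the tag is in $\FAC^0$, so $\encoder \in \FAC^0$. For correctness, the type-($\arule$) rules make each seed flood its graph: every configuration reachable from the initial one eventually appears as an object (the graphs are acyclic, so this terminates), and in particular $\langle c^{\mathrm{acc}},k_i\rangle$ appears iff $\tallyMachine$ accepts $1^{k_i}$, i.e.\ iff $a_i \in T$. Hence a $\yes$ object appears in $\env$ iff some query lies in $T$, which by the disjunctive reduction holds iff $x \in L$; read as an acknowledger (Definition~\ref{def:acknowledger}), the system accepts exactly when $x \in L$. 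Every vertex object has a unique applicable rule, so the outcome is independent of the order of rule application and the family is confluent, and the longest path in each graph is bounded by its polynomial vertex count, so the system halts in polynomially many steps.

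The step I expect to require the most care is getting $\yes$ to materialise in~$\env$ while keeping the system halting in polynomial time. The flood is essential for confluence---a single non-deterministic token walking one path would make some computations accept and others reject---yet it can create exponentially many copies of a reachable configuration (one per path). Delivering $\yes$ via a communication-out rule $\CommOutRule{\langle c^{\mathrm{acc}},k\rangle}{\env}{\yes}$ would be fatal, since a membrane may be the subject of at most one type-($\crule$) rule per step, so clearing exponentially many accept-objects would take exponentially many steps. I avoid this by carrying out the entire simulation in the $\env$ membrane and producing $\yes$ with the \emph{unbounded-parallel} type-($\arule$) rewrite above, which consumes all accept-objects simultaneously. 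This is also precisely why the construction is stated only for acknowledger systems: the flood may leave several $\yes$ objects in $\env$, which violates the single-$\yes$ requirement of recogniser (and the mutual-exclusion requirement of \emphatic) systems but is exactly what acknowledger acceptance permits. If one insists that the input membrane be distinct from $\env$, the same effect is obtained by having $\encoder(x)$ place the polynomially-many (non-proliferating) seeds in a child input membrane and communicating each out to $\env$ once before the flood begins.
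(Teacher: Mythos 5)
Your proof is correct and takes essentially the same route as the paper's: $\former(1^n)$ precomputes the configuration graph of $\tallyMachine$ for every possible query length up to the polynomial bound, bundles each vertex's outgoing edges into a single type-($\arule$) rule with the accepting configuration rewriting to $\yes$, and $\encoder(x)$ deposits the initial-configuration objects for the actual queries, with acknowledger semantics absorbing both multiple $\yes$ objects and rejection by absence of $\yes$. Your additional discussion of exponential object multiplicities, determinism/confluence, and why a type-($\crule$) delivery of $\yes$ would fail is a careful elaboration of points the paper leaves implicit, not a different argument.
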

  \begin{proof}
    Let $L \in \Reducible{dtt}{\FAC^0}{\tally\NL}$.
    That is,
      there exists a unary language $T \subseteq \{1\}^*$ 
        that is recognised by 
          a non-deterministic logspace Turing machine
            $\tallyMachine$,
      and a function $r \in \FAC^0$ that maps
        $x \in \{ 0,1 \}^*$ to 
          a set of unary words such that $r(x) \cap T \neq \emptyset$
          if $x \in L$, and $r(x) \cap T = \emptyset$ otherwise. 
    Let $q'(|x|)  = \max(\{ \max(|r(w)|) \mid w \in \{0,1\}^{|x|} \})$, 
      that is, the length of largest word produced by $r$ on any input of
      length $|x|$. 
    Note that $q'(|x|)$ is computable by $\former$ since $r \in \FAC^0$. 

    We present an $\FAC^0$ uniform polynomial-time $\AMowoDiss$ membrane family
      $\membraneFamily$ that recognises $L$.
    The family is composed of two functions:
      the uniformity function  
        $\former\colon \{1\}^* \rightarrow \mathbf{\Pi}$;
      and $\encoder$ that maps from binary words to the multiset of unique objects in 
      the appropriate member of $\membraneFamily$.

    Each member $\Pi_{|x|} = \former(1^{|x|})$ of $\membraneFamily$ has one
    single membrane, $\env$, that is both the environment and the input
    membrane.
    On input  $1^{|x|}$, 
      the function 
    $\former$ produces a 
      configuration graph 
        $G_{\tallyMachine,u}$ for  
            machine~$\tallyMachine$ on input $1^u$
              for each $u \in \{ 1,2 ,\ldots, q'(|x|) \}$.
    (Note that this is a generalization of the technique used in the proof of
        Lemma~\ref{lem:NL-sub-PMCwodiss}.)
    Since we have unary input words we can include the input word as part of
    the configuration to ensure that there is a unique input configuration for 
    each $G_{\tallyMachine,u}$.
   
    Each of the $q'(|x|)$ configuration graphs are
      converted to membrane rules and objects,
      using the same technique (without the second Turing
      machine that solves un-reachability) from the proof of
        Lemma~\ref{lem:NL-sub-PMCwodiss}, 
          of a single membrane system $\Pi_{|x|}$. 
    In summary, the vertices of the configuration graphs become objects in~$\Pi_{|x|}$ 
      and the edges in the graph become type ($a$) rules.
    There is a type (a) rule that maps the object encoding the accepting configuration of 
      $\tallyMachine$ to $\yes$. 
    We do not include the second Turing machine that solves un-reachability from 
        Lemma~\ref{lem:NL-sub-PMCwodiss}.
    $\tallyMachine$ is a logspace machine and 
    so its configuration graph is of polynomial size, it follows that the membrane 
    system is of polynomial size.
    It is relatively straightforward to verify that $\former \in \FAC^0$.

     The input encoder $\encoder(x)$ simulates 
        $r(x)$ to find the set of unary words $(u_1, \ldots, u_k)$,
        then outputs an object $c_{i,u}$ for each $u \in r(x)$, %
        which encode the vertex of the configuration graph corresponding to
        the initial configurations of Turing machine $\tallyMachine$ 
        input $u$. 
    Since $r \in \FAC^0$ it is not difficult to see that $\encoder \in \FAC^0$.

    We now show that 
      the membrane system $\Pi_{|x|}$ on input $\encoder(x)$ accepts if $x \in L$ and
      otherwise rejects. 

      Suppose $x \in L$. 
      This implies that at least one word in $r(x)$ is in the tally set $T$ 
        and so $\tallyMachine$ accepts on at least one of these inputs.
      The input membrane of $\Pi_{|x|}$ contains $\encoder(x)$ 
        which includes the object $c_{i,u}$ which encodes the configuration graph vertex that represents 
        the initial configuration of Turing machine~$\tallyMachine$ on input
        $1^u$.
       In the proof of Lemma~\ref{lem:NL-sub-PMCwodiss} we show 
         how the construction of $\Pi_{|x|}$ is such that 
         there is a sequence of rules from the input object $c_{i,u}$ to the 
         $\yes$ object and so 
          $\Pi_{|x|}$ on input $\encoder(x)$ will accept. 
      Suppose $u \notin L$.
      This implies that none of the unary words  $r(x)$ are in the tally set $T$ 
        and that $\tallyMachine$ does not have any accepting computations on 
        any of the words $1^j$ in $r(x)$.
      So, as  in the proof of Lemma~\ref{lem:NL-sub-PMCwodiss},        this 
        implies that none of the objects in $\encoder(x)$ in the input
        membrane of $\Pi_{|x|}$ can evolve to  
          the object $\yes$ in the $\env$ membrane.
      In this case the membrane system $\Pi_{|x|}$ on input $\encoder(x)$
        will halt without $\yes$ object; a rejecting computation for an acknowledger membrane system.
        
      Therefore the pair of functions $\former$
        and $\encoder$ provide a uniform family of polynomial time 
        $\AMowoDiss$ membrane systems that accept 
        $L \in \Reducible{dtt}{\FAC^0}{\tally\NL}$.
  \end{proof}

  \section{The computational  power of \emphatic membrane systems}
  \label{sec:stricter}
  In this section we further investigate how the details in the definition of acceptance and rejection for recogniser membrane systems affect the computational power of uniform families of 
  $\AMowoDiss$ systems. 

  In Section~\ref{sec:membrane-uni} we consider acknowledger membrane systems (Definition~\ref{def:acknowledger}) where 
    the absence of a $\yes$ object in the environment in the last step of any computation of a membrane system is sufficient to say that the system rejected its
    input. 
  However, if we restrict to \emphatic membrane systems, which must produce 
  one or more $\yes$ objects in the case of an accepting computation
  and one or more $\no$ objects in the case of a rejecting computation (Definition~\ref{def:emphaticrecogniser})
    it is no longer clear if our characterisation 
     of $\UniformMembrane$ for acknowledger systems can still hold.
  The best lower-bound we find is
    $\Reducible{m}{\FAC^0}{\tally\NL}$, and we obtain upper-bounds of $\Reducible{dtt}{\FAC^0}{\tally\NL}$ and 
    $\Reducible{ctt}{\FAC^0}{\tally\NL}$.

  In the semi-uniform case the upperbound 
       $\ACPMCwodissSemi \subseteq \NL$ is unaffected by the restriction from acknowledger to \emphatic membrane systems. 
  It also turns out that these more restricted \emphatic membrane systems have
  the same $\NL$ lower-bound on their power as acknowledger membrane systems
  (see Lemma~\ref{lem:NL-sub-PMCwodiss}).

  \begin{lemma}
     $\UniformMembrane \subseteq
     \Reducible{ctt}{\FAC^0}{\tally\NL}$, for \emphatic and recogniser membrane systems.
     \label{lem:strict_unimem_in_ctt_tallynl}
  \end{lemma}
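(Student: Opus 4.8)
The plan is to mirror the proof of Lemma~\ref{lem:unimem_subset_dttTallyNL}, but to characterise membership in $L$ through the $\no$ object rather than the $\yes$ object; this will turn the disjunction of that proof into a conjunction and so yield a ctt reduction instead of a dtt one. Let $L \in \UniformMembrane$ be decided by the $\FAC^0$-uniform family $\membraneFamily$ with functions $\encoder,\former \in \FAC^0$, so that $\Pi_{|x|}(\encoder(x))$ accepts iff $x \in L$. The key structural fact, specific to \emphatic and recogniser systems, is that such a (confluent) system \emph{rejects} exactly when it produces a $\no$ object, and \emph{accepts} exactly when no $\no$ object is ever produced. Hence I claim that $x \in L$ if and only if \emph{none} of the objects present in the initial configuration of $\Pi_{|x|}(\encoder(x))$ eventually evolves to $\no$ in $\env$ — a conjunction over those objects.

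First I would justify this equivalence using confluence together with a $\no$-analogue of Lemma~\ref{lem:many_steps}. The proof of that lemma uses only the fact that no rule applies to $\yes$; since (Section~\ref{sec:recog}) no rule applies to $\no$ either, the identical argument shows that if some object $o$ in the initial configuration eventually evolves to $\no$, then there is a halting computation of the full system with $\no$ in $\env$. By confluence all computations agree, so if any object could evolve to $\no$ the system must reject; conversely, if no object evolves to $\no$, then no computation can produce $\no$, and since (by the definition of \emphatic and recogniser systems) every halting computation emits exactly one of $\yes$ or $\no$, every computation must produce $\yes$, so the system accepts. This is exactly where the restriction away from acknowledger systems is essential: for acknowledger systems rejection is signalled by the \emph{absence} of $\yes$ rather than by the presence of $\no$, so this $\no$-based characterisation fails and one is left only with the dtt bound of Lemma~\ref{lem:unimem_subset_dttTallyNL}.

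Next I would build the conjunctive reduction, dualising the construction of Lemma~\ref{lem:unimem_subset_dttTallyNL}. Define the unary language $T$ by declaring $1^{\langle o, h, |x|\rangle} \in T$ iff object $o$ in a membrane labelled $h$ of $\Pi_{|x|}$ does \emph{not} eventually evolve to $\no$ in $\env$ (using a nested application of the $\FAC^0$ pairing function). To decide $T$ in nondeterministic logspace I would reconstruct the relevant logarithmic-sized pieces of $\former(1^{|x|})$ on the fly, as in Lemmas~\ref{lem:PMCwodiss-sub-NL} and~\ref{lem:unimem_subset_dttTallyNL}, and run the standard unreachability algorithm for nondeterministic logspace (as used in Lemma~\ref{lem:NL-sub-PMCwodiss}) to verify that $\no$ is unreachable from $o$ in $h$; since $T$ is unary this gives $T \in \tally\NL$. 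The reduction $\tau \in \FAC^0$ maps $x$ to the tuple of words $1^{\langle o, h, |x|\rangle}$ ranging over every object–membrane pair in the initial configuration of $\Pi_{|x|}(\encoder(x))$ (both the objects supplied by $\encoder(x)$ and any fixed objects placed by $\former$), using the $\FAC^0$ pairing and binary–unary conversion functions. By the equivalence above, $x \in L$ iff every such object avoids $\no$, i.e.\ iff $\bigwedge_{i} \chi_T(u_i) = 1$, which is precisely a conjunctive truth-table reduction, giving $L \in \Reducible{ctt}{\FAC^0}{\tally\NL}$.

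The main obstacle is the first step: cleanly establishing, for confluent \emphatic and recogniser systems, that acceptance is equivalent to the non-production of $\no$ by every object of the initial configuration. The content is the interplay between confluence and context-freeness — one direction needs the $\no$-analogue of Lemma~\ref{lem:many_steps} (a single evolving object suffices to force $\no$ into $\env$, and confluence then propagates this to a global rejection), and the other relies on the definitional guarantee that every halting computation emits exactly one of $\yes$ or $\no$. Once this equivalence is in hand, the construction of $T$ and $\tau$ is essentially the dual of the dtt construction already carried out.
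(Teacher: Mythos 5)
Your proposal matches the paper's proof essentially step for step: both dualise the dtt construction of Lemma~\ref{lem:unimem_subset_dttTallyNL} by defining a tally language of pairs for which the object does \emph{not} eventually evolve to $\no$ in $\env$ (well-defined via the $\no$-analogue of Lemma~\ref{lem:many_steps}, since no rule applies to $\no$), decided in $\NL$ by the Immerman--Szelepcs\'enyi unreachability technique, with the same $\tau$ now read as a conjunctive reduction. Your version is if anything slightly more careful than the paper's sketch, since you pair in the membrane label and range over all object--membrane pairs of the initial configuration (including the fixed objects placed by $\former$, not just those in $\encoder(x)$), but this is a refinement of the same argument rather than a different route.
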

    \begin{proof}
    \emph{(Sketch)}
    This proof closely follows  that of Lemma~\ref{lem:unimem_subset_dttTallyNL}
      so we just highlight the differences. 
    In the proof of Lemma~\ref{lem:unimem_subset_dttTallyNL} the 
      language $T$ is 
        the set of words $1^{\langle o, |x|\rangle}$ where 
          membrane system $\former(1^{|x|}) = \Pi_{|x|}$ accepts if
          object~$o$ in the input membrane eventually evolves to
            the object $\yes$ in the $\env$ membrane.
    In this proof we consider the language $T'$ that is
      the set of words $1^{\langle o, |x|\rangle}$ where 
        in the membrane system $\Pi_{|x|} $ %
        the object $o$ %
        does not evolve to the object $\no$ in the $\env$ membrane, 
          in any computation.
    Via Lemma~\ref{lem:many_steps}, this language is well-defined, 
      i.e.\ can defined in terms of $o$ and $|x|$.
    Also, Turing machine $\tallyMachine$ from Lemma~\ref{lem:unimem_subset_dttTallyNL} 
      (that solves reachability)
      can be modified~\cite{Imm1988p,Sze1988p} to give $\tallyMachine'$ (that solves unreachability)
      that accepts the language $T'$.
    That is,
        $\tallyMachine'$ accepts if no object with the desired,
          and easy to check, property can be evolved by rule applications. 
    
      In Lemma~\ref{lem:unimem_subset_dttTallyNL} 
        we defined the function $\tau \in \FAC^0$, 
        that maps from $\{0,1\}^*$ to the set of tuples of unary words.
      Recall that $\tau(x)$ maps to a list that contains 
        a unary string 
        $1^{\langle o, |x|\rangle}$ for each $o$ in  $\encoder(x)$. 
    We now prove that $\tau$ is a conjunctive reduction from $L$ to $T'$.   

    Assume $x \in L$,
        this implies that no object in
        $\Pi_{|x|}$ with input $\encoder(x)$ eventually evolves 
        to  $\no$  in the $\env$ membrane. %
 Hence $x \in L$ implies that $\forall w \in \tau(x), w \in T'$.
    
    Assume $x \notin L$,
        this implies that \emph{at least one} object in the initial configuration of $\Pi_{|x|}(\encoder(x))$ eventually evolves  
        a $\no$ object in the $\env$ membrane in each computation of $\Pi_{|x|}$.
    Hence $x \notin L$ implies that $\exists w \in \tau(x)$ such that $w \notin T'$.
\end{proof}

  \begin{lemma}
     $\Reducible{m}{\FAC^0}{\tally\NL}
     \subseteq
     \UniformMembrane$, for acknowledger and \emphatic membranes systems.
     \label{lem:strict_mTallyNL_subset_uniMem}
  \end{lemma}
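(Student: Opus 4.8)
The acknowledger case is immediate. Since every many-one reduction is in particular a disjunctive truth-table reduction (see the implications following Definition~\ref{def:turing_reduction}), we have $\Reducible{m}{\FAC^0}{\tally\NL} \subseteq \Reducible{dtt}{\FAC^0}{\tally\NL}$, and the latter is contained in $\UniformMembrane$ for acknowledger systems by Lemma~\ref{lem:dttTallyNL_subset_uniMem}. So the real work is the \emphatic case, where a rejecting computation must actively produce a $\no$ object (and no $\yes$) rather than merely fail to produce a $\yes$. The plan is to combine the uniform configuration-graph construction of Lemma~\ref{lem:dttTallyNL_subset_uniMem} with the two-machine idea used for the \emphatic semi-uniform case in Lemma~\ref{lem:NL-sub-PMCwodiss}.

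First I would fix the data. Let $L \in \Reducible{m}{\FAC^0}{\tally\NL}$, so there is a unary language $T \in \tally\NL$, decided by a nondeterministic logspace machine $\tallyMachine$, and an $\FAC^0$-computable reduction $r$ with $w \in L$ iff $r(w) \in T$. Because $T$ is unary, $w \in L$ forces $r(w)$ to be a unary string $1^u$. Since $\NL$ is closed under complement (Immerman--Szelepcs\'enyi~\cite{Imm1988p,Sze1988p}), fix a second nondeterministic logspace machine $\tallyMachine'$ that accepts $1^u$ iff $1^u \notin T$, obtained from $\tallyMachine$ by the standard unreachability algorithm exactly as the machine $N_M$ was used in Lemma~\ref{lem:NL-sub-PMCwodiss}.

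Next I would build the uniform family $(\former,\encoder)$. The single-membrane system $\former(1^{n}) = \Pi_{n}$ encodes, as objects and type~($\arule$) rules, the configuration graphs $G_{\tallyMachine,1^u}$ and $G_{\tallyMachine',1^u}$ for every $u \in \{1,\ldots,q'(n)\}$, where $q'(n)$ is the polynomial, $\former$-computable maximum length of $r(w)$ over $w \in \{0,1\}^{n}$; these graphs are $\FAC^0$-computable by Lemma~\ref{lem:config_in_AC0}. I add the rules $\EvolveRule{c_a}{\env}{\yes}$ and $\EvolveRule{c_b}{\env}{\no}$, where $c_a,c_b$ encode the accepting configurations of $\tallyMachine,\tallyMachine'$ respectively. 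The encoder $\encoder(w)$ computes $r(w)$ in $\FAC^0$: if $r(w)=1^u$ it outputs the two objects encoding the initial configurations of $\tallyMachine$ and $\tallyMachine'$ on $1^u$; if $r(w)$ is non-unary (so necessarily $w \notin L$) it outputs the single object $\no$, which, having no applicable rule, remains in $\env$. Correctness follows from the single-query structure of a many-one reduction, just as in Lemma~\ref{lem:NL-sub-PMCwodiss}: if $w \in L$ then $1^u \in T$, so $\tallyMachine$ has a path to $c_a$ (hence $\yes$ is produced, by Lemma~\ref{lem:many_steps}) while $\tallyMachine'$ has no path to $c_b$ (no $\no$); if $w \notin L$ the situation is symmetric (or $\no$ is seeded directly). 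In every case exactly one of $\yes,\no$ appears, so the family is confluent and meets both the acknowledger and the \emphatic acceptance conditions. Since $\tallyMachine,\tallyMachine'$ are logspace, their configuration graphs, and hence $\Pi_n$, are of polynomial size, the graphs are acyclic so computations halt in polynomial time, and $\former,\encoder \in \FAC^0$.

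The main obstacle, and the reason the lower bound obtained here is $\Reducible{m}{\FAC^0}{\tally\NL}$ rather than something larger, is precisely the \emphatic requirement that a $\no$ be produced on rejection. With a single query this is handled cleanly by the complementary machine $\tallyMachine'$. But with several queries, as in a dtt or ctt reduction, the $\no$-producing branch would need to output an \emph{aggregate} (a disjunction or conjunction) over the individual query answers, and the essentially context-free rule structure of $\AMowoDiss$ systems established in Lemma~\ref{lem:many_steps} evolves each object's fate independently, so it cannot combine these answers while still guaranteeing that exactly one of $\yes,\no$ is produced.
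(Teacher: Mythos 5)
Your proof is correct and follows essentially the same route as the paper's: the paper likewise builds, uniformly from $1^{n}$, the configuration graphs of $\tallyMachine$ and of the complementary machine $N_\tallyMachine$ (via Immerman--Szelepcs\'enyi) on every unary input $1^u$ with $1 \le u \le q(n)$, encodes them as type~($\arule$) rules with the accepting configurations mapped to $\yes$ and $\no$, and has $\encoder$ seed the two initial-configuration objects determined by $r(w)$. Your acknowledger shortcut through Lemma~\ref{lem:dttTallyNL_subset_uniMem} and your explicit handling of a non-unary $r(w)$ are minor tidy variations; the paper's single construction covers both acceptance modes directly.
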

  \begin{proof}
    Let $L \in \Reducible{m}{\FAC^0}{\tally\NL}$.
    That is,
      there exists a unary language $T \subseteq \{1\}^*$ 
        that is recognised by 
          non-deterministic logspace Turing machine
            $\tallyMachine$,
      and a function $r \in \FAC^0$ that maps
        $x \in \{ 0,1 \}^*$ to 
          a unary word such that $r(x) \in T$ iff $x \in L$.
    Let $q(|x|)  = \max(\{ |r(w)| \mid w \in \{0,1\}^{|x|} \})$,
      that is, the largest word produced by $r$ on any input of length $|x|$. 
    Note that $q(|x|)$ is computable by $\former$ since $r \in \FAC^0$. 

    We present an $\FAC^0$ uniform polynomial-time $\AMowoDiss$ membrane family
      $\membraneFamily$ that recognises $L$.
    The family is composed of two functions:
        $\former\colon \{1\}^* \rightarrow \mathbf{\Pi}$,
      and $\encoder$ that maps each binary word to a multiset of objects 
      from the appropriate member of $\membraneFamily$. 

    Each member $\Pi_{|x|} = \former(1^{|x|})$ of $\membraneFamily$ has one
    single membrane, $\env$, that is both the environment and the input
    membrane.
    On input  $1^{|x|}$
      the function 
    $\former$ produces one configuration graph 
      $G_{\tallyMachine,u}$ 
        for 
        machine $\tallyMachine$
      (that accepts $T$) on each input $1^u$, $1 \leq u \leq q(|x|)$, 
      and one configuration graph $G_{N,u}$ for machine $N_\tallyMachine$
      (that accepts the compliment of $T$) on each input $1^u$, $1 \leq u \leq
      q(|x|)$. 
    (Note that this is a generalization of the technique used in the proof of
        Lemma~\ref{lem:NL-sub-PMCwodiss}.) 
   
    Each of the $2q(|x|)$ configuration graphs are
      modified to give a set of rules and objects of 
      a single membrane system
      $\Pi_{|x|}$ using the same technique as used in the proof of
        Lemma~\ref{lem:NL-sub-PMCwodiss}.
    In summary, the vertices of the configuration graphs become objects in $\Pi_{|x|}$ 
      and the edges in the graph become type~($\text{a}$) rules.
    There is a rule mapping the object encoding the accepting configuration of 
      $\tallyMachine$ to $\yes$ 
      and rule mapping object encoding the accepting configuration of $N_\tallyMachine$ to $\no$. 
    Both $\tallyMachine$ and $N_\tallyMachine$ are logspace machines and 
    so their configuration graphs are of polynomial size and so the membrane 
    system is of polynomial size.
    It is relatively straightforward to verify that $\former \in \FAC^0$.
    
     The input encoder $\encoder(x)$ simulates 
        $r(x)$ to find $1^u$, then outputs two objects $c_{i,u}$ and $c_{j,u}$
        which encode the vertex of the configuration graph corresponding to
        the initial configurations of Turing machines $\tallyMachine$ and
        $N_\tallyMachine$ respectively on input $1^u=r(x)$. 
    Since $r \in \FAC^0$ it is not difficult to see that $\encoder \in \FAC^0$.

    We now show that 
      the membrane system $\Pi_{|x|}$ on input $\encoder(x)$ accepts if $x \in L$ and
      otherwise rejects. 

      Suppose $x \in L$. 
      This implies that the word $r(x) = 1^u$ is in the tally set $T$ 
        and so at least one computation of $\tallyMachine$ accepts $1^u$.
      It also implies that there is no computation of $N_\tallyMachine$
        that accepts on input $1^u$.
      The input membrane of $\Pi_{|x|}$ contains $\encoder(x)$ 
        which includes $c_{i,u}$ encoding the configuration graph vertex that represents 
        the initial configuration of Turing machine $\tallyMachine$ on input
        $1^u$.
       In the proof of Lemma~\ref{lem:NL-sub-PMCwodiss} we show 
         how $\Pi_{|x|}$  has the property that  
         there is a sequence of rules from the input object $c_{i,u}$ to the 
         $\yes$ object and so 
          $\Pi_{|x|}(\encoder(x))$ will accept.
       Likewise there is no path from $c_{j,u}$ to $\no$. 

      Suppose $u \notin L$.
      This implies that the word $r(u) = 1^j$ is not in the tally set $T$ 
        and that therefore there is no accepting configuration of
        $\tallyMachine$
        on input $1^u$, however, there is at least one accepting computation
        of~$N_\tallyMachine$ on the same input. 
       In the proof of Lemma~\ref{lem:NL-sub-PMCwodiss} we show 
         how the construction of $\Pi_n$ is such that 
         there is a sequence of rules from the input object $c_{j,u}$ to the 
         $\no$ object and so 
          $\Pi_n(\encoder(x))$ will reject. 
       Likewise there is no path from $c_{i,u}$ to $\yes$. 

      Therefore the pair of functions $\former$
        and $\encoder$ provide a uniform family of polynomial time 
        $\AMowoDiss$ membrane systems that accept any language in
        $\Reducible{m}{\FAC^0}{\tally\NL}$.
  \end{proof}

  \section{Open Problems}
    \label{sec:openquestions}
    \paragraph{The power of recogniser membrane systems.}
    In Sections~\ref{sec:semi_uniform} and~\ref{sec:membrane-uni} of this
    paper we characterise the power of acknowledger membrane systems
    (Definition~\ref{def:acknowledger}), which are a generalisation of
    recogniser membrane systems.  In Section~\ref{sec:stricter} we give upper
    and lower bounds on the power of the more restricted \emphatic membrane
    systems (Definition~\ref{def:emphaticrecogniser}), which are closer in
    power to standard recogniser membrane systems. We also give upper bounds
    on the power of uniform and semi-uniform recogniser membrane systems
    (Definition~\ref{def:recogniser}), as well as showing that these classes
    are distinct. 
    
    However, we have not characterised the power of $\AMowoDiss$
    recogniser membrane systems (Definition~\ref{def:recogniser}) with the kind of tight uniformity conditions used in this paper.
    In such systems, in an accepting computation 
      exactly one $\yes$ object,
      or in a rejecting computation exactly one $\no$ object,
      is produced at the final step.
    A consequence of this is that our techniques for showing lower bounds on
    the power of acknowledger and \emphatic systems
    (Sections~\ref{sec:semi_uniform},~\ref{sec:membrane-uni}
    and~\ref{sec:stricter}) in terms of non-deterministic logspace-bounded
    Turing machines do not immediately carry over to recogniser systems. 
       
    As future work, we suggest that recogniser systems could be characterised
    via \emph{unambiguous} non-deterministic logspace-bounded Turing
    machines~\cite{AL1998}. An unambiguous machine accepts an input if and
    only if it has exactly one accepting computation. 
    Perhaps the class of problems solved by 
      semi-uniform families of recogniser $\AMowoDiss$ systems,
      i.e.\ $\SemiUniMembrane$, does not contain $\NL$-complete problems
      since the system cannot
        control how many $\yes$ objects it produces?
    Perhaps these semi-uniform recogniser systems can solve 
        $s$-$t$ connectivity for ``mangrove'' graphs,
        i.e.\ graphs where there is exactly one path between 
          each pair of vertices which is contained in
          unambiguous logspace~\cite{AL1998}?
           Formally, we conjecture that $\SemiUniMembrane = \ComplexityFont{RUSPACE}(\log n)$~\cite{AL1998}. 
    We also conjecture that for the analogous uniform families of recogniser systems 
      $\UniformMembrane = \Reducible{m}{\FAC^0}{\ComplexityFont{RUSPACE}(\log
        n)}$. 
         If proven, our conjectures, taken together with previous results~\cite{AL1998}, would give a restatement of the relationship between the classes $\L$, unambiguous logspace and $\NL$ in the membrane computing
    model, as well as the other classes shown in Figure~\ref{fig:summaryofallresults}. 
    
    \paragraph{Tight uniformity conditions for other classes of membrane
      systems.}
    In this paper and others~\cite{muphthesis, MW2007p, MW2008c, MW2008bc, MW2011}, 
    we have put forward the idea of exploring the power of membrane systems
    under tight uniformity conditions. Others have since carried on this line
    of investigation~\cite{PLMZ2013}. Besides the main result in this paper 
    (exhibiting systems where uniformity is a strictly weaker notion than
    semi-uniformity) this has led to various other characterisations of the
    power of a variety of classes of membrane systems and a teasing apart of
    their power.  A number of other varieties of membrane systems
    (e.g.~\cite{GPR2009,PP2010}) characterise the complexity class $\P$, but
    where the lower-bound actually depends on the use of $\P$ uniformity. As
    future work, it would be interesting to investigate these, and other,
    systems under suitably tight notions of uniformity or semi-uniformity. 

    \paragraph{Upper-bounding $\tally\NL$.}
    While we know that $\tally\NL \subsetneq \NL$ it would be interesting to find other classes to upper bound  $\tally\NL$. 
    It is known that if a sparse language is complete for $\NL$ then $\NL
    \subseteq  
    \DLOGTIME$ uniform-$\TC^0$~\cite{CS2000,HAB2002}.
    Is it possible to show that $\tally\NL \subseteq \DLOGTIME$ uniform-$\TC^0$?

  \paragraph{Classes reducible to $\tally\NL$.}
  We conjecture that $\Reducible{m}{\FAC^0}{\tally\NL}$ is strictly 
  contained in $\Reducible{dtt}{\FAC^0}{\tally\NL}$.
  Giving an exact characterisation of \emphatic membrane systems studied in Section~\ref{sec:stricter}
  may provide some insights into this. 
  We also conjecture that 
    $\Reducible{dtt}{\FAC^0}{\tally\NL} \neq
    \Reducible{ctt}{\FAC^0}{\tally\NL}$.
  A lead to solve this may come from Ko~\cite{Ko1989} who showed that  
    $\Reducible{ctt}{\P}{\tally} \neq \Reducible{dtt}{\P}{\tally}$.

  \section*{Acknowledgements}
  Thanks to an anonymous reviewer for a thorough
  reading and helpful comments. This paper appears in a special issue
  dedicated to Mario de Jes\'{u}s P\'{e}rez-Jim\'{e}nez on this 65th birthday; we thank
  Mario for giving us the opportunity to enjoy Sevilla and its wonderful
  cortados.


\begin{thebibliography}{10}

\bibitem{Adl1994p}
Adleman, L.: Molecular computation of solutions to combinatorial problems,
\newblock \emph{Science}, \textbf{266}, 1994, 1021--1024.

\bibitem{AMP2003p}
Alhazov, A., Mart{\'i}n-Vide, C., Pan, L.: Solving a {PSPACE}-Complete Problem
  by Recognizing {P Systems} with Restricted Active Membranes,
\newblock \emph{Fundamenta Informaticae}, \textbf{58}(2), 2003, 67--77.

\bibitem{AP2004p}
Alhazov, A., Pan, L.: Polarizationless {P Systems} with active membranes,
\newblock \emph{Grammars}, \textbf{7}, 2004, 141--159.

\bibitem{AP2007c}
Alhazov, A., P\'{e}rez-Jiménez, M.~J.: Uniform Solution to {QSAT} Using
  Polarizationless Active Membranes,
\newblock in: \emph{Machines, Computations and Universality (MCU)}, vol. 4664
  of \emph{Lecture Notes in Computer Science}, Springer, 2007,  122--133.

\bibitem{AK2010}
Allender, E., Kouck\'{y}, M.: Amplifying lower bounds by means of
  self-reducibility,
\newblock \emph{Journal of the ACM}, \textbf{57}, March 2010, 14:1--14:36.

\bibitem{AL1998}
Allender, E., Lange, K.-J.: {RUSPACE}$(\log n) \subseteq $\ {DSPACE}$(\log^{2}
  n / \log \log n)$,
\newblock \emph{Theory of Computing Systems}, \textbf{31}(5), 1998, 539--550.

\bibitem{AB2009x}
Arora, S., Barak, B.: \emph{Computational Complexity: A Modern Approach},
\newblock Cambridge University Press, 2009,
\newblock ISBN 978-0-511-53381-5.

\bibitem{Winfree05bar}
Barish, R.~D., Rothemund, P. W.~K., Winfree, E.: Two computational primitives
  for algorithmic self-assembly: copying and counting,
\newblock \emph{Nano Lett.}, \textbf{5}, 2005, 2586--2592.

\bibitem{BSRW2009}
Barish, R.~D., Schulman, R., Rothemund, P. W.~K., Winfree, E.: An
  information-bearing seed for nucleating algorithmic self-assembly,
\newblock \emph{{PNAS}}, \textbf{106}(15), 2009, 6054--6059.

\bibitem{BB1988}
Book, R.~V., Ko, K.-I.: On sets truth-table reducible to sparse sets,
\newblock \emph{SIAM Journal of Computing}, \textbf{17}(5), 1988, 903--919.

\bibitem{Bor1977}
Borodin, A.: On relating time and space to size and depth,
\newblock \emph{SIAM Journal on Computing}, \textbf{6}(4), 1977, 733--744.

\bibitem{BHL1995}
Buhrman, H., Hemaspaandra, E., Longpre, L.: {SPARSE} reduces conjunctively to
  {TALLY},
\newblock \emph{SIAM Journal of Computing}, \textbf{24}, June 1995, 673--681.

\bibitem{CS2000}
Cai, J.-Y., Sivakumar, D.: Resolution of {H}artmanis' conjecture for {NL}-hard
  sparse sets,
\newblock \emph{Theoretical Computer Science}, \textbf{240}(2), 2000, 257 --
  269.

\bibitem{CSV1984}
Chandra, A.~K., Stockmeyer, L.~J., Vishkin, U.: Constant depth reducibility,
\newblock \emph{SIAM Journal of Computing}, \textbf{13}(2), 1984, 423--439.

\bibitem{nubotsDNA20}
Chen, H.-L., Doty, D., Holden, D., Thachuk, C., Woods, D., Yang, C.-T.: Fast
  algorithmic self-assembly of simple shapes using random agitation,
\newblock \emph{DNA20: The 20th International Conference on DNA Computing and
  Molecular Programming}, LNCS, Springer, Kyoto, Japan, September 2014.

\bibitem{CHMT2012}
Condon, A., Hu, A.~J., Ma{\v{n}}uch, J., Thachuk, C.: Less haste, less waste:
  on recycling and its limits in strand displacement systems,
\newblock \emph{Journal of the Royal Society -- Interface focus},
  \textbf{2}(4), 2012, 512--521.

\bibitem{Cook2009}
Cook, M., Soloveichik, D., Winfree, E., Bruck, J.: Programmability of 
  {C}hemical {R}eaction {N}etworks,
\newblock in: \emph{Algorithmic Bioprocesses}, Springer, 2009,  543--584.

\bibitem{dabbyChenSODA2012}
Dabby, N., Chen, H.-L.: Active self-assembly of simple units using an insertion
  primitive,
\newblock \emph{SODA: Proceedings of the Twenty-fourth Annual ACM-SIAM
  Symposium on Discrete Algorithms}, 2012.

\bibitem{FSS1984p}
Furst, M.~L., Saxe, J.~B., Sipser, M.: Parity, circuits and the polynomial-time
  hierarchy,
\newblock \emph{Theory of Computing Systems (formerly Mathematical Systems
  Theory)}, \textbf{17}(1), 1984, 13--27.

\bibitem{GPR2009}
Guti{\' e}rrez-Escudero, R., P{\' e}rez-Jim{\' e}nez, M.~J., Rius-Font, M.:
  Characterizing tractability by tissue-like {P} systems,
\newblock in: \emph{Workshop on Membrane Computing 10}, vol. 5957 of
  \emph{Lecture Notes in Computer Science}, Springer, 2009,  269--181.

\bibitem{NJNC2006p}
Guti\'{e}rrez-Naranjo, M.~A., P\'{e}rez-Jim\'{e}nez, M.~J.,
  Riscos-N\'{u}{\~n}ez, A., Romero-Campero, F.~J.: Computational efficiency of
  dissolution rules in membrane systems,
\newblock \emph{International Journal of Computer Mathematics}, \textbf{83}(7),
  2006, 593--611.

\bibitem{HRBBLS2000}
Head, T., Rozenberg, G., Bladergroen, R.~S., Breek, C. K.~D., Lommerse, P.
  H.~M., Spaink, H.~P.: Computing with {DNA} by operating on plasmids,
\newblock \emph{Biosystems}, \textbf{57}(2), 2000, 87--93.

\bibitem{HAB2002}
Hesse, W., Allender, E., {Mix Barrington}, D.~A.: Uniform constant-depth
  threshold circuits for division and iterated multiplication,
\newblock \emph{Journal of Computer and System Sciences}, \textbf{65}(4), 2002,
  695--716.

\bibitem{Imm1988p}
Immerman, N.: Nondeterministic space is closed under complementation,
\newblock \emph{SIAM Journal of Computing}, \textbf{17}(5), 1988, 935--938.

\bibitem{Imm1999x}
Immerman, N.: \emph{Descriptive Complexity},
\newblock Springer, 1999,
\newblock ISBN 0387986006.

\bibitem{Ko1989}
Ko, K.-I.: Distinguishing conjunctive and disjunctive reducibilities by sparse
  sets,
\newblock \emph{Information and Computation}, \textbf{81}(1), 1989, 62--87.

\bibitem{LLS1975}
Ladner, R.~E., Lynch, N.~A., Selman, A.~L.: A comparison of polynomial time
  reducibilities,
\newblock \emph{Theoretical Computer Science}, \textbf{1}(2), 1975, 103--123.

\bibitem{Lip1995}
Lipton, R.~J.: {DNA solution of hard computational problems},
\newblock \emph{Science}, \textbf{268}(5210), 1995, 542--545.

\bibitem{LWFCCS2000}
Liu, Q., Wang, L., Frutos, A.~G., Condon, A.~E., Corn, R.~M., Smith, L.~M.:
  {DNA} computing on surfaces,
\newblock \emph{Nature}, \textbf{403}(6766), 2000, 175--179.

\bibitem{MalchikWinsow}
Malchik, C., Winslow, A.: Tight Bounds for Active Self-Assembly Using an
  Insertion Primitive,
\newblock \emph{ESA: The 22nd European Symposium on Algorithms}, 2014,
\newblock Accepted. Arxiv preprint
  \href{http://arxiv.org/abs/1401.0359}{\texttt{arXiv:1401.0359}} [cs.FL].

\bibitem{GLPZ2013}
Mauri, G., Leporati, A., Porreca, A.~E., Zandron, C.: Recent
  complexity-theoretic results on {P} systems with active membranes,
\newblock \emph{Journal of Logic and Computation}, 2013, (awaiting
  publication).

\bibitem{BIS1990p}
{Mix Barrington}, D.~A., Immerman, N., Straubing, H.: On uniformity within
  {NC$^1$},
\newblock \emph{Journal of Computer and System Sciences}, \textbf{41}(3), 1990,
  274--306.

\bibitem{muphthesis}
Murphy, N.: \emph{Uniformity conditions for membrane systems: {U}ncovering
  complexity below {P}},
\newblock Ph.D. Thesis, National University of Ireland Maynooth, 2010.

\bibitem{MW2007p}
Murphy, N., Woods, D.: Active membrane systems without charges and using only
  symmetric elementary division characterise {P},
\newblock in: \emph{8th International Workshop on Membrane Computing}, vol.
  4860 of \emph{Lecture Notes in Computer Science}, Springer, 2007,  367--384.

\bibitem{MW2008c}
Murphy, N., Woods, D.: A characterisation of {NL} using membrane systems
  without charges and dissolution,
\newblock in: \emph{Unconventional Computing, 7th International Conference, UC
  2008, Vienna, Austria,}, vol. 5204 of \emph{Lecture Notes in Computer
  Science}, Springer, 2008,  164--176.

\bibitem{MW2008bc}
Murphy, N., Woods, D.: On acceptance conditions for membrane systems:
  characterisations of {L} and {NL},
\newblock \emph{Proceedings of the International Workshop on The Complexity of
  Simple Programs}, 1, Electronic Proceedings in Theoretical Computer Science,
  2009,
\newblock Arxiv preprint:
  \href{http://arxiv.org/abs/arXiv:0906.3327v1}{\texttt{arXiv:0906.3327v1}}
  [cs.CC].

\bibitem{MW2011}
Murphy, N., Woods, D.: The computational power of membrane systems under tight
  uniformity conditions,
\newblock \emph{Natural Computing}, \textbf{10}(1), 2011, 613--632.

\bibitem{MurphyWoods2013}
Murphy, N., Woods, D.: {AND} and/or {OR}: Uniform polynomial-size circuits,
\newblock \emph{MCU: Proceedings of Machines, Computations and Universality},
  128, Electronic Proceedings in Theoretical Computer Science, 2013,
\newblock Arxiv preprint:
  \href{http://arxiv.org/abs/1212.3282v2}{\texttt{arXiv:1212.3282v2}} [cs.CC].

\bibitem{OKLL1997}
Ouyang, Q., Kaplan, P.~D., Liu, S., Libchaber, A.: {DNA} solution of the
  Maximal Clique Problem,
\newblock \emph{Science}, \textbf{278}(5337), 1997, 446--449.

\bibitem{PP2010}
Pan, L., P\'{e}rez-Jim\'{e}nez, M.~J.: Computational complexity of tissue-like
  P systems,
\newblock \emph{Journal of Complexity}, \textbf{26}(3), 2010, 296--315.

\bibitem{Pap1993x}
Papadimitriou, C.~H.: \emph{Computational Complexity},
\newblock Addison Wesley, 1993,
\newblock ISBN 0201530821.

\bibitem{Par1994b}
Parberry, I.: \emph{Circuit complexity and neural networks},
\newblock MIT Press, 1994,
\newblock ISBN 0-262-16148-6.

\bibitem{PRS2010_handbook}
P{\u a}un, G., Rozenberg, G., Salomaa, A., Eds.: \emph{The Oxford Handbook of
  Membrane Computing},
\newblock Oxford University Press, Inc., New York, NY, USA, 2010,
\newblock ISBN 0199556679, 9780199556670.

\bibitem{PRRW2009x}
P{\'e}rez-Jim{\'e}nez, M.~J., Riscos-N{\' u}{\~ n}ez, A., Romero-Jim{\'e}nez,
  A., Woods, D.: \emph{The Oxford Handbook of Membrane systems}, chapter 12:
  Complexity -- Membrane Division, Membrane Creation,
\newblock Oxford University Press, 2009,  302--336.

\bibitem{PRS2003p}
P\'{e}rez-Jim\'{e}nez, M.~J., Romero-Jim\'{e}nez, A., Sancho-Caparrini, F.:
  Complexity classes in models of cellular computing with membranes,
\newblock \emph{Natural Computing}, \textbf{2}(3), 2003, 265--285.

\bibitem{PLMZ2013}
Porreca, A.~E., Leporati, A., Mauri, G., Zandron, C.: Sublinear-space {P}
  systems with active membranes,
\newblock in: \emph{Proceedings of the 13th International Conference on
  Membrane Computing 2012}, vol. 7762 of \emph{Lecture Notes in Computer
  Science}, Springer, 2013,  342--357.

\bibitem{post1944}
Post, E.~L.: Recursively enumerable sets of positive integers and their
  decision problems,
\newblock \emph{Bulletin of the American Mathematical Society}, \textbf{50}(5),
  1944, 284--316.

\bibitem{Pau2003}
P\u{a}un, G.: Membrane computing,
\newblock in: \emph{Fundamentals of computation theory}, vol. 2751 of
  \emph{Lecture Notes in Computer Science}, Springer, 2003,  284--295.

\bibitem{Pau2005c}
P\u{a}un, G.: Further twenty six open problems in membrane computing,
\newblock \emph{Proceedings of the {T}hird {B}rainstorming {W}eek on {M}embrane
  {C}omputing}, F\'{e}nix Editoria, 2005.

\bibitem{qian2011scaling}
Qian, L., Winfree, E.: Scaling up digital circuit computation with {DNA} strand
  displacement cascades,
\newblock \emph{Science}, \textbf{332}(6034), 2011, 1196.

\bibitem{rothemund2000program}
Rothemund, P.~W., Winfree, E.: The program-size complexity of self-assembled
  squares,
\newblock \emph{Proceedings of the thirty-second annual ACM symposium on Theory
  of computing}, ACM, 2000.

\bibitem{soloveichik2008computation}
Soloveichik, D., Cook, M., Winfree, E., Bruck, J.: Computation with finite
  stochastic chemical reaction networks,
\newblock \emph{Natural Computing}, \textbf{7}(4), 2008, 615--633.

\bibitem{WS2005p}
Soloveichik, D., Winfree, E.: The computational power of {B}enenson automata,
\newblock \emph{Theoretical Computer Science}, \textbf{344}, 2005, 279--297.

\bibitem{SW2007p}
Soloveichik, D., Winfree, E.: Complexity of self-assembled shapes,
\newblock \emph{SIAM Journal of Computing}, \textbf{36}(6), 2007, 1544--1569.

\bibitem{Sos2003p}
Sos\'{i}k, P.: The computational power of cell division in {P systems}: Beating
  down parallel computers?,
\newblock \emph{Natural Computing}, \textbf{2}(3), 2003, 287--298.

\bibitem{Sze1988p}
Szelepcs{\'e}nyi, R.: The method of forced enumeration for nondeterministic
  automata,
\newblock \emph{Acta Informatica}, \textbf{26}(3), 1988, 279--284.

\bibitem{TC2012}
Thachuk, C., Condon, A.: Space and energy efficient computation with DNA strand
  displacement systems,
\newblock in: \emph{18th International Conference on DNA Computing and
  Molecular Programming}, vol. 7433 of \emph{Lecture Notes in Computer
  Science}, Springer, 2012,  135--150.

\bibitem{thachuk2013space}
Thachuk, C.~J.: \emph{Space and energy efficient molecular programming and
  space efficient text indexing methods for sequence alignment},
\newblock Ph.D. Thesis, University of British Columbia, 2013.

\bibitem{Vol1999x}
Vollmer, H.: \emph{Introduction to Circuit Complexity: A Uniform Approach},
\newblock Springer-Verlag New York, Inc., Secaucus, NJ, USA, 1999,
\newblock ISBN 3540643109.

\bibitem{nubots}
Woods, D., Chen, H.-L., Goodfriend, S., Dabby, N., Winfree, E., Yin, P.: Active
  self-assembly of algorithmic shapes and patterns in polylogarithmic time,
\newblock \emph{ITCS'13: Proceedings of the 4th conference on Innovations in
  Theoretical Computer Science}, ACM, 2013,
\newblock Full version:
  \href{http://arxiv.org/abs/1301.2626}{\texttt{arXiv:1301.2626}} [cs.DS].

\end{thebibliography}
\end{document}